\title{Zigzag Persistence}
\author{Gunnar Carlsson \and Vin de~Silva}
\date{\today.}
\thanks{This work has been supported by DARPA grants HR0011-05-1-0007 and HR0011-07-1-0002.}
\numberwithin{equation}{section}
\theoremstyle{plain}
\newtheorem{Theorem}[equation]{Theorem}
\newtheorem{Lemma}[equation]{Lemma}
\newtheorem{Proposition}[equation]{Proposition}
\newtheorem{Corollary}[equation]{Corollary}
\newtheorem{Addendum}[equation]{Addendum}
\newtheorem*{Moral}{Moral}
\theoremstyle{definition}
\newtheorem{Definition}[equation]{Definition}
\newtheorem{Example}[equation]{Example}
\newtheorem{Caution}[equation]{Caution}
\newtheorem{Algorithm}[equation]{Algorithm}
\theoremstyle{remark}
\newtheorem*{Remark}{Remark}
\newtheorem*{Example*}{Example}
\newcommand{\half}{\frac{1}{2}}
\newcommand{\tm}[1]{{\mathbb{#1}}}
\newcommand{\Uu}{\tm{U}}
\newcommand{\Vv}{\tm{V}}
\newcommand{\Ww}{\tm{W}}
\newcommand{\Xx}{\tm{X}}
\newcommand{\Ii}{\tm{I}}
\newcommand{\Ff}{\mathbb{F}}
\newcommand{\fv}[1]{{\mathcal{#1}}}
\newcommand{\Ll}{\fv{L}}
\newcommand{\Rr}{\fv{R}}
\newcommand{\Ss}{\fv{S}}
\newcommand{\Tt}{\fv{T}}
\newcommand{\Jj}{\fv{J}}
\newcommand{\rf}{\mathrm{R}} 
\newcommand{\lf}{\mathrm{L}} 
\newcommand{\Bb}{\fv{B}}
\newcommand{\col}{\operatorname{Column}}
\newcommand{\row}{\operatorname{Row}}
\newcommand{\rmap}{\longrightarrow}
\newcommand{\lmap}{\longleftarrow}
\newcommand{\lrmap}{\longleftrightarrow}
\newcommand{\fmap}[1]{\stackrel{f_{#1}}{\rmap}}
\newcommand{\gmap}[1]{\stackrel{g_{#1}}{\lmap}}
\newcommand{\pmap}[1]{\stackrel{p_{#1}}{\lrmap}}
\newcommand{\bt}{\mathrm{b}} 
\newcommand{\dt}{\mathrm{d}} 
\newcommand{\taumod}{{\tau\mathrm{Mod}}}
\newcommand{\filt}{\operatorname{Filt}}
\newcommand{\Hom}{\operatorname{Hom}}
\newcommand{\End}{\operatorname{End}}
\newcommand{\Img}{\operatorname{Im}}
\newcommand{\Ker}{\operatorname{Ker}}
\newcommand{\Coker}{\operatorname{Coker}}
\newcommand{\Span}{\operatorname{Span}}
\newcommand{\Pers}{\operatorname{Pers}}
\begin{document}
\begin{abstract}
We describe a new methodology for studying persistence of topological features across a family of spaces or point-cloud data sets, called zigzag persistence. Building on classical results about quiver representations, zigzag persistence generalises the highly successful theory of persistent homology and addresses several situations which are not covered by that theory. In this paper we develop theoretical and algorithmic foundations with a view towards applications in topological statistics.
\end{abstract}

\maketitle
\section{Introduction}
\label{sec:intro}

\subsection{Overview}

In this paper, we describe a new methodology for studying persistence of topological features across a family of spaces or point-cloud data sets. This theory of \emph{zigzag persistence} generalises the successful and widely used theory of persistence and persistent homology~\cite{Edelsbrunner_L_Z_2002,Zomorodian_Carlsson_2005}. 
Moreover, zigzag persistence can handle several important situations that are not currently addressed by standard persistence.

The zigzag persistence framework is activated whenever one constructs a \emph{zigzag diagram} of topological spaces or vector spaces: a sequence of spaces $S_1, \dots, S_n$ where each adjacent pair is connected by a map $S_i \to S_{i+1}$ or $S_{i} \leftarrow S_{i+1}$. The novelty of our approach is that the direction of each linking map is arbitrary, in contrast to the usual theory of persistence where all maps point in the same direction.

This paper has three principal objectives:
\begin{itemize}
\item
To describe several scenarios in applied topology where it is natural to consider zigzag diagrams (Section~\ref{sec:intro}).
\item
To develop a mathematical theory of persistence for zigzag diagrams (Sections \ref{sec:zigzags} and~\ref{sec:zigzag2filt}).
\item
To develop algorithms for computing zigzag persistence (Section~\ref{sec:algorithms}).
\end{itemize}
There is one subsidiary objective:
\begin{itemize}
\item
To introduce the \emph{Diamond Principle}, a calculational tool analogous in power and effect to the Mayer--Vietoris theorem in classical algebraic topology (Section~\ref{sec:further}).
\end{itemize}
This is a theoretical paper rather than an experimental paper, and we devote most of our effort to covering the mathematical foundations adequately. The technical basis for zigzag persistence comes from the theory of graph representations, also known as quiver theory. We are deeply indebted to the practitioners of that theory; what is new here is the emphasis on algorithmics and on applications to topology (particularly Sections \ref{sec:intro}, \ref{sec:algorithms} and~\ref{sec:further}).

\subsection{Persistence}

One of the principal challenges when attempting to apply algebraic topology to statistical data is the fact that traditional invariants --- such as the Betti numbers or the fundamental group --- are extremely non-robust when it comes to discontinuous changes in the space under consideration. Persistent homology~\cite{Edelsbrunner_L_Z_2002,Zomorodian_Carlsson_2005} is the single most powerful existing tool for addressing this problem.

A typical workflow runs as follows~\cite{deSilva_Carlsson_2004}. The input is a \emph{point cloud}, that is, a finite subset of some Euclidean space or more generally a finite metric space. After an initial filtering step (to remove undesirable points or to focus on high-density regions of the data, say), a set of vertices is selected from the data, and a simplicial complex~$S$ is built on that vertex set, according to some prearranged rule. In practice, the simplicial complex depends on a coarseness parameter~$\epsilon$, and what we have is a nested family $\{ S_\epsilon\}_{\epsilon \in [0, \infty]}$, which typically ranges from a discrete set of vertices at~$S_0$ to a complete simplex at~$S_\infty$.

Persistent homology takes the entire nested family $\{S_\epsilon\}$ and produces a \emph{barcode} or \emph{persistence diagram} as output. A barcode is a collection of half-open subintervals $[b_j,d_j) \subseteq [0,\infty)$, which describes the homology of the family as it varies over~$\epsilon$. An interval $[b_j,d_j)$ represents a homological feature which is born at time~$b_j$ and dies at time~$d_j$. This construction has several excellent properties:
\begin{itemize}
\item
There is no need to select a particular value of~$\epsilon$.
\item
Features can be evaluated by interval length. Long intervals are expected to indicate essential features of the data, whereas short intervals are likely to be artefacts of noise.
\item
There exists a fast algorithm to compute the barcode~\cite{Zomorodian_Carlsson_2005}.
\item
The barcode is a complete invariant of the homology of the family of complexes~\cite{Zomorodian_Carlsson_2005}.
\item
The barcode is provably stable with respect to changes in the input~\cite{CohenSteiner_E_H_2007}. In contrast, any individual homology group $H_k(S_\epsilon)$ is highly unstable.
\end{itemize}

The major limitation of persistence is that it depends crucially on the family $\{S_\epsilon\}$ being nested, in the sense that $S_\epsilon \subseteq S_{\epsilon'}$ whenever $\epsilon \leq \epsilon'$.
This applies to the current theoretical understanding as well as the algorithms.
Zigzag persistence addresses this limitation.

If we discretise the variable~$\epsilon$ to a finite set of values, the family of simplicial complexes can be thought of as a diagram of spaces
\[
S_1 \to S_2 \to \dots \to S_n
\]
where the arrows represent the inclusion maps.
If we apply the $k$-dimensional homology functor $H_k(_; \Ff)$ with coefficients in a field~$\Ff$, this becomes a diagram of vector spaces
\[
V_1 \to V_2 \to \dots \to V_n
\]
and linear maps, where $V_i = H_k(S_i; \Ff)$. Such a diagram is called a \emph{persistence module}. What makes persistence work is that there is a simple algebraic classification of persistence modules up to isomorphism; each possible barcode corresponds to an isomorphism type.

Our goal is to achieve a similar classification for diagrams in which the arrows may point in either direction. This is zigzag persistence, in a nutshell.

\subsection{Zigzag diagrams in applied topology}

We consider some problems which arise quite naturally in the computational topology of data.  

\begin{Example}
\label{densityprofile}
Some of the most interesting properties of a point cloud are contained in the estimates of the probability density from which the data are sampled. Deep structure is sometimes revealed after thresholding according to a density estimate (see \cite{Carlsson_I_dS_Z_2008} for an example drawn from visual image analysis). However, the construction of a density estimation function~$\rho$ invariably depends on choosing a smoothing parameter: for instance $\rho(x)$ might be defined to be the number of data points within distance~$r$ of~$x$; here $r$ is the smoothing parameter.

It happens that different choices of smoothing parameter may well reveal different structures in the data; a particularly striking example of this occurs in~\cite{Carlsson_I_dS_Z_2008}.
Statisticians have invented useful criteria for determining what the `appropriate' value of such a parameter might be for a particular data set;  but another point of view would be to analyse all values of the parameter simultaneously, and to study how the topology changes as the parameter varies.

The problem with doing this is that there is no natural relationship between, say, the $25\% $ densest  points as measured using two different parameter values.  This means that one cannot build an increasing family of spaces using the change in parameters, and so one cannot use persistence to analyze the evolution of the topology.  On the other hand, there are natural zigzag sequences which can be used to study this problem. Select a sequence of parameter values $r_1 < r_2 <  \dots<  r_n$ and a percentage~$p$, and let $X_r^p$ denote the densest $p\%$ of the point cloud when measured according to parameter value~$r$. We can then consider the union sequence
\[
X^p_{r_1}
\rightarrow
X^p_{r_1} \cup X^p_{r_2}
\leftarrow
X^p_{r_2}
\rightarrow
X^p_{r_2} \cup X^p_{r_3}
\leftarrow
X^p_{r_3}
\rightarrow
\dots
\leftarrow
X^p_{r_n}
\]
or the intersection sequence
\[
X^p_{r_1}
\leftarrow
X^p_{r_1} \cap X^p_{r_2}
\rightarrow
X^p_{r_2}
\leftarrow
X^p_{r_2} \cap X^p_{r_3}
\rightarrow
X^p_{r_3}
\leftarrow
\dots
\rightarrow
X^p_{r_n}.
\]
As we see in Section~\ref{subsec:strongdiamond}, there is essentially no difference between the zigzag persistent homology of the union and intersection sequences of a sequence of spaces. Here that assertion needs to be filtered through the process of representing the data subsets~$X_r^p$ as simplicial complexes.
\end{Example}

\begin{Example}[Topological bootstrapping]
\label{sampling}
Suppose we are given a very large point cloud~$X$. If it is too large to process directly, we may take a sequence of small samples $X_1, \dots, X_n$ and estimate their topology individually, perhaps obtaining a persistence barcode for each one. How does this reflect the topology of the original sample~$X$? On one hand, if most of the barcodes have similar appearance, then one might suppose that $X$ itself will have the same barcode. On the other hand, one needs to be able to distinguish between a single feature detected repeatedly, and multiple features detected randomly but one at a time. If we detect $n$~features in~$X_i$ on average, are we detecting $n$~features of~$X$ with detection probability~1, or $kn$ features with detection probability~$1/k$? 

Once again, there is a need to correlate features across different instances of the construction. The union sequence comes to the rescue:
\[
X_1
\rightarrow
X_1 \cup X_2 
\leftarrow
X_2
\rightarrow
\dots
\leftarrow
X_n  
\]
In this case, the intersection sequence is not useful at the level of samples, because two sparse samples are unlikely to intersect
very much.

The approach in this example is analogous to bootstrapping in statistics, where measurements on a large data set are estimated by making repeated measurements on a set of samples.
\end{Example}

\begin{Example}
\label{landmarking}
In computational topology, there exist several techniques for modelling a point cloud data set~$X$ by a simplicial complex~$S$: the Cech complex, the Vietoris--Rips complex, the alpha complex~\cite{Edelsbrunner_Mucke_1994}, the witness complex~\cite{deSilva_Carlsson_2004}, and so on. The witness complex $W(X; L)$, in particular, depends on the choice of a small subset of `landmark' points $L \subset X$ which will serve as the vertex set of~$S$.
Roughly speaking, a simplex~$\sigma$ with vertices in~$L$ is included in $W(X;L)$ if there is some $x \in X$ which witnesses it, by being close to all the vertices.

How does the witness complex $W(X; L)$ depend on the choice of landmark set? There is no direct way to compare $W(X; L)$ with $W(X; M)$ for two different choices of landmark sets $L, M$. However, it turns out that one can define a witness \emph{bicomplex} $W(X; L, M)$ which maps onto each witness complex. The cells are cartesian products $\sigma \times \tau$, where $\sigma, \tau$ have vertices in $L, M$ respectively. A cell $\sigma \times \tau$ is included provided that there exists $x \in X$ which simultaneously witnesses $\sigma$ for $W(X;L)$ and $\tau$ for $W(X;M)$.

Given a sequence $L_1, \dots, L_n$ of landmark subsets, one can then construct the biwitness zigzag:
\[
W(X; L_1)
\leftarrow
W(X; L_1, L_2)
\rightarrow
W(X; L_2)
\leftarrow
\dots
\rightarrow
W(X; L_n)
\]
Long intervals in the zigzag barcode will then indicate features that persist across the corresponding range of choices of landmark set.
\end{Example}

The fundamental requirement is then for a way of assessing, in a zigzag diagram of vector spaces, the degree to which consistent families of elements exist.  The point  of this paper is that there is such methodology.  We will interpret the isomorphism classes of zig-zag diagrams as a special case of the classification problem for quiver representations (see \cite{Derksen_Weyman_2005} for background on this theory). There turns out to be a  theorem of Gabriel~\cite{Gabriel_1972} which classifies arbitrary diagrams based on Dynkin diagrams, and which shows in particular that the set of isomorphism classes of zigzag diagrams of a given length is parametrised by barcodes --- just as persistence modules are. Long intervals in the classification define large families of consistent elements, hence indicate the presence of features stable across samples, landmark sets,  or parameter values for a density estimator.

\subsection{Organisation of the paper}

In Section~\ref{sec:zigzags} we describe the theory of decompositions of zigzag modules. These decompositions produce zigzag persistence barcodes analogous to the barcodes of persistent homology. The foundational theorem of Gabriel is stated without proof.
In Section~\ref{sec:zigzag2filt} we develop the machinery of right-filtrations, which turn out to be the right tool for accessing the decomposition structure of a zigzag module. This is an important section for the reader who wishes to make serious use of zigzag persistence.
In Section~\ref{sec:algorithms}, we present a general-purpose algorithmic framework for calculating zigzag persistence, and we show how this operates in a practical class of examples. The algorithm is based on a proof of Gabriel's theorem for zigzag modules, included for completeness.
Section~\ref{sec:further} is devoted to a localisation principle which gives another approach to zigzag barcode calculations. We apply this to prove the Diamond Principle. We use this in turn to compare the zigzag barcodes for two natural zigzag diagrams obtained from a sequence of simplicial complexes.
 
\section{Zigzag Diagrams of Vector Spaces}
\label{sec:zigzags}

We work over a field $\Ff$ which remains fixed throughout this paper. There is no significance to the choice of~$\Ff$. All vector spaces are finite-dimensional.

\subsection{Zigzag modules.}
\label{subsec:taumod}

Let $\Vv$ denote a sequence of vector spaces and linear maps, of length~$n$:
\[
V_1 \stackrel{p_1}{\lrmap}
V_2 \stackrel{p_2}{\lrmap}
\dots
\stackrel{p_{n-1}}{\lrmap}
V_n
\]
Each $\stackrel{p_i}{\lrmap}$ represents either a forward map $\fmap{i}$ or a backward map $\gmap{i}$.
The object $\Vv$ is called a \textbf{zigzag diagram} of vector spaces, or simply a \textbf{zigzag module}, over~$\Ff$.

The sequence of symbols $f$ or $g$ is the \textbf{type} of~$\Vv$. For instance, a diagram of type $\tau = fgg$ looks like this:
\[
V_1 \fmap{1}
V_2 \gmap{2}
V_3 \gmap{3}
V_4
\]
The length of a type~$\tau$ is the length of any diagram of type~$\tau$. For example, we say that $fgg$ has length~4.
We will usually be considering zigzag modules of a fixed type~$\tau$ of length~$n$. Such diagrams are called \textbf{$\tau$-modules}, and the class of $\tau$-modules is denoted $\taumod$.

\textbf{Persistence modules} (see~\cite{Edelsbrunner_L_Z_2002,Zomorodian_Carlsson_2005}) are zigzag modules where all the maps have the forward orientation; in other words, where $\tau = ff\dots f$.
As explained in~\cite{Zomorodian_Carlsson_2005}, persistence modules can be viewed as graded modules over the polynomial ring $\Ff[t]$.  This observation simplifies the analysis of persistence modules quite considerably.

More generally, one can consider \textbf{graph representations} of arbitrary oriented graphs. Zigzag modules constitute the special case where the graph is $A_n$ (a path with $n$~vertices and $n-1$ edges) and the orientation is specified by the type~$\tau$.
In 1972, Gabriel showed that the Dynkin--Coxeter graphs $A_n$, $D_n$, $E_6$, $E_7$, $E_8$ (arbitrarily oriented) have an especially well-behaved representation theory~\cite{Gabriel_1972}. 
The theory of \textbf{quivers} was launched from this starting block; see~\cite{Derksen_Weyman_2005} for a beautiful and transparent introduction.
Zigzag persistence is enabled by the good behaviour of $A_n$ graph representations.

\begin{Remark}
$\taumod$ has the structure of an abelian category. Given two $\tau$-modules $\Vv, \Ww$, a morphism $\alpha: \Vv \to \Ww$ is defined to be a collection of linear maps $\alpha_i: V_i \to W_i$ which satisfy the commutation relations $\alpha_{i+1} f_i = h_i \alpha_i$ or $\alpha_i g_i = k_i \alpha_{i+1}$ for each~$i$. (Here the forward and backward maps for~$\Ww$ are written $h,k$ respectively.)
Morphisms can be composed in the obvious way, and have kernels, images, and cokernels: for instance $\tm{K} = \Ker(\alpha)$ is the $\tau$-module with spaces $K_i = \Ker(V_i \to W_i)$ and maps $f_i |_{K_i}$ and $g_i |_{K_{i+1}}$ defined by restriction. 
The set of morphisms $\Hom(\Vv,\Ww)$ is naturally a vector space over~$\Ff$, and the endomorphism ring $\End(\Vv) = \Hom(\Vv,\Vv)$ is a non-commutative $\Ff$-algebra. We can view $\End(\Vv)$ as the subalgebra of $\End(V_1) \times \dots \times \End(V_n)$ defined by the commutation relations.
\end{Remark}

\subsection{Decompositions of zigzag modules.}
\label{subsec:taudecomp}

We wish to understand zigzag modules by decomposing them into simpler parts. Accordingly, a \textbf{submodule} $\Ww$ of a $\tau$-module~$\Vv$ is defined by subspaces $W_i \leq V_i$ such that $f_i(W_i) \subseteq W_{i+1}$ or $g_i(W_{i+1}) \subseteq W_i$ for all~$i$. These conditions guarantee that $\Ww$ is itself a $\tau$-module, with maps given by the restrictions $f_i |_{W_i}$ or $g_i | _{W_{i+1}}$. We write $\Ww \leq \Vv$.

A submodule~$\Ww$ is called a \textbf{summand} of~$\Vv$ if there exists a submodule $\Xx \leq \Vv$ which is complementary to~$\Ww$, in the sense that $V_i = W_i \oplus X_i$ for all~$i$.
%
%
In that case, we say that $\Vv$ is the \textbf{direct sum} of $\Ww, \Xx$ and write $\Vv = \Ww \oplus \Xx$.

\begin{Example}
As a rule, most submodules are not summands. $\Vv = (\Ff \stackrel{1}{\rmap} \Ff)$ has the submodule $\Ww = (0 \rmap \Ff)$. However, $\Ww$ is not a summand because the only possible complement is $(\Ff \rmap 0)$, and that is not a submodule of~$\Vv$.
\end{Example}

\begin{Remark}
The direct sum can also be defined as an `external' operation: given $\tau$-modules $\Vv, \Ww$ their direct sum $\Vv \oplus \Ww$ is defined to be the $\tau$-module with spaces $V_i \oplus W_i$ and maps $f_i \oplus h_i$ or $g_i \oplus k_i$.
(Here the forward and backward maps for~$\Ww$ are written $h,k$ respectively.)
\end{Remark}

A $\tau$-module $\Vv$ is \textbf{decomposable} if it can be written as a direct sum of nonzero submodules, and \textbf{indecomposable} otherwise. 
Any $\tau$-module $\Vv$ has a \textbf{Remak decomposition}; in other words we can write $\Vv = \Ww_1 \oplus \dots \oplus \Ww_N$, where the summands $\Ww_j$ are indecomposable. The existence of such a decomposition is proved by induction on the total dimension $\sum_i \dim(V_i)$: if $\Vv$ is decomposable, say $\Vv = \Ww \oplus \Xx$, then we may assume inductively that $\Ww, \Xx$ have Remak decompositions, and therefore so does~$\Vv$.
(Base case: if $\Vv$ is indecomposable, then it has a Remak decomposition with one term.)

Remak decompositions themselves are not unique. However, the Krull--Schmidt principle from commutative algebra tells us that the summands in a Remak decomposition are unique up to reordering:

\begin{Proposition}(Krull--Remak--Schmidt.)
\label{prop:krull}
Suppose a $\tau$-module $\Vv$ has Remak decompositions
\[
\Vv = \Ww_1 \oplus \dots \oplus \Ww_M
\quad
\mbox{and}
\quad
\Vv = \Xx_1 \oplus \dots \oplus \Xx_N.
\]
Then $M = N$ and there is some permutation~$\sigma$ of $\{1, \dots, N\}$ such that $\Ww_j \cong \Xx_{\sigma(j)}$ for all~$j$.
\end{Proposition}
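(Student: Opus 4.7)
The plan is to follow the standard Krull--Schmidt strategy, adapted to $\taumod$: first establish a Fitting decomposition for endomorphisms, deduce that indecomposable $\tau$-modules have local endomorphism rings, and then run an exchange argument on idempotents to match up the summands.

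First I would prove the following Fitting-type lemma in $\taumod$: for any $\alpha \in \End(\Vv)$ there is an integer $k$ such that $\Vv = \Img(\alpha^k) \oplus \Ker(\alpha^k)$ as $\tau$-modules. Since $\sum_i \dim(V_i)$ is finite, the descending chain $\Img(\alpha) \supseteq \Img(\alpha^2) \supseteq \dots$ and the ascending chain $\Ker(\alpha) \subseteq \Ker(\alpha^2) \subseteq \dots$ both stabilise. Once they stabilise, classical Fitting at each level~$i$ yields the vector space decomposition $V_i = \Img(\alpha_i^k) \oplus \Ker(\alpha_i^k)$, and because $\Img(\alpha^k)$ and $\Ker(\alpha^k)$ are submodules, each structure map $f_i$ or $g_i$ respects the splitting. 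Hence the decomposition is internal to $\taumod$.

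Next I would deduce that $\End(\Vv)$ is a local ring whenever $\Vv$ is indecomposable. Given $\alpha \in \End(\Vv)$, indecomposability forces one of $\Img(\alpha^k)$ or $\Ker(\alpha^k)$ to vanish in the Fitting splitting, so $\alpha$ is either an isomorphism or nilpotent. To conclude locality, one checks that the set of non-isomorphisms is closed under addition: if $\alpha,\beta$ are both nilpotent and $\alpha + \beta$ were invertible, then upon rescaling we could arrange $\alpha + \beta = \mathrm{id}$, whence $\alpha = \mathrm{id} - \beta$ is invertible via the finite geometric series in the nilpotent $\beta$, contradicting nilpotence of $\alpha$. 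So $\End(\Vv)$ has a unique maximal (two-sided) ideal, namely the nilpotent endomorphisms.

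Finally, the exchange argument: given the two Remak decompositions, let $\iota_j : \Ww_j \hookrightarrow \Vv$ and $\pi_j : \Vv \twoheadrightarrow \Ww_j$ be the inclusion and projection for the first decomposition, and $\iota'_k, \pi'_k$ those for the second. Then
\[
\mathrm{id}_{\Ww_j} \;=\; \sum_{k=1}^{N} \pi_j \iota'_k \pi'_k \iota_j \quad \in \End(\Ww_j).
\]
Since $\End(\Ww_j)$ is local, not every summand lies in the maximal ideal, so some $\pi_j \iota'_k \pi'_k \iota_j$ is an isomorphism; set $\sigma(j) = k$. Then $\pi'_{\sigma(j)} \iota_j : \Ww_j \to \Xx_{\sigma(j)}$ is split monic, and a symmetric argument with the roles of the two decompositions swapped shows it is also split epic, hence an isomorphism. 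A short induction, replacing $\Xx_{\sigma(j)}$ by $\Ww_j$ inside the second decomposition one summand at a time, shows that $\sigma$ can be chosen as a bijection (forcing $M = N$).

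The main obstacle is the Fitting step: while the vector space calculation at each level is classical, one must ensure that the stabilisation happens uniformly across all indices $i$ and that the resulting splitting is compatible with both the forward and backward structure maps. Once that is established, everything else is the standard abelian-category Krull--Schmidt machine applied to $\taumod$.
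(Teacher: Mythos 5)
Your proposal is correct and follows essentially the same route as the paper: the paper's proof simply cites Theorem~7.5 of Lang (the standard Krull--Schmidt argument via Fitting's lemma, local endomorphism rings, and the exchange lemma), observing that all the required manipulations take place in $\End(\Vv)$ and that ACC/DCC hold by finite total dimension, while you spell out that same argument in full. One small point of care in your locality step: after rescaling by $(\alpha+\beta)^{-1}$, the elements $\gamma^{-1}\alpha$ and $\gamma^{-1}\beta$ are no longer the original nilpotents, so you should note that they remain non-isomorphisms (products of a unit with a non-unit) and are therefore nilpotent by the dichotomy you just established before invoking the geometric series.
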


\begin{proof}
The proof of Theorem~7.5 of Lang~\cite{Lang_2005}, which is stated for modules in the ordinary sense, can be applied verbatim to our present context; all the required algebraic operations can be carried out within $\End(\Vv)$. Since our $\tau$-modules have finite total dimension, the ascending and descending chain conditions (\textsc{acc} and \textsc{dcc}) are automatic.
\end{proof}

For further context, we refer the reader to an elegant article by Atiyah~\cite{Atiyah_1956}; the Krull--Schmidt principle applies in any exact abelian category to objects which satisfy \textsc{acc} and \textsc{dcc}, or a weaker `bi-chain condition' defined by Atiyah. Our category, $\taumod$, is included by this formulation. 

Thus we can use the multiset $\{ \Ww_j \}$ as an isomorphism invariant of~$\Vv$.
For this to be useful, we need to identify the set of indecomposable $\tau$-modules.
We now describe a natural collection of indecomposables. For each subinterval $[b,d]$ of the integer sequence $\{1, \dots, n\}$ there is an associated $\tau$-module.

\begin{Definition}
Let $\tau$ be a type of length~$n$ and consider integers $1 \leq b \leq d \leq n$. The \textbf{interval} $\tau$-module with birth time~$b$ and death time~$d$ is written $\Ii_\tau(b,d)$ and defined with spaces
\[
I_i = \left\{ 
\begin{array}{ll}
\Ff \quad & \mbox{if $b \leq i \leq d$,}
\\
0 \quad & \mbox{otherwise;}
\end{array}
\right.
\]
and with identity maps between adjacent copies of $\Ff$, and zero maps otherwise. When $\tau$ is implicit, we will usually suppress it and simply write $\Ii(b,d)$.
\end{Definition}

\begin{Example*}
If $\tau = fgg$ then $\Ii(2,3)$ is the zigzag module
\[
0 \stackrel{0}{\rmap}
\Ff \stackrel{1}{\lmap}
\Ff \stackrel{0}{\lmap}
0.
\]
\end{Example*}

\begin{Proposition}
\label{prop:intind}
Interval $\tau$-modules are indecomposable.
\end{Proposition}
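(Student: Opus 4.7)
The plan is to take an arbitrary direct-sum decomposition $\Ii(b,d) = \Ww \oplus \Xx$ and show that one of the summands must be zero. The key observation is that the nonzero spaces of $\Ii(b,d)$ are all $1$-dimensional and the linking maps between them are identities; this severely constrains what any submodule can look like.

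First I would fix such a decomposition and note that outside the interval $[b,d]$ all spaces vanish, so trivially $W_i = X_i = 0$ there. For each $i$ with $b \leq i \leq d$ we have $W_i \oplus X_i = \Ff$, so each of $W_i$ and $X_i$ is either $\{0\}$ or the whole of $\Ff$. The core of the argument is then to propagate this choice across the interval using the submodule condition.

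Next I would handle the propagation step. Between adjacent positions $i, i+1$ inside $[b,d]$ the map $p_i$ is the identity $\Ff \to \Ff$, regardless of whether it is a forward map $f_i$ or a backward map $g_i$. The submodule condition becomes $W_i \subseteq W_{i+1}$ in the forward case and $W_{i+1} \subseteq W_i$ in the backward case, and similarly for $\Xx$. Together with the complementary dimension equation $\dim W_i + \dim X_i = \dim W_{i+1} + \dim X_{i+1} = 1$, this forces $W_i = W_{i+1}$ and $X_i = X_{i+1}$ as subspaces of $\Ff$. Iterating, $W_i$ is constant (either $\{0\}$ throughout or $\Ff$ throughout) on $[b,d]$, and similarly for $X_i$; since they are complementary one is identically $\Ff$ and the other identically $\{0\}$.

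I expect no serious obstacle: the whole argument reduces to the remark that a $1$-dimensional space has only the trivial direct-sum decomposition, combined with the fact that an identity map does not admit a nontrivial block-diagonal decomposition. The only place to be mildly careful is keeping the two orientations ($f_i$ versus $g_i$) uniformly in view so that the inclusion $W_i \subseteq W_{i+1}$ or $W_{i+1} \subseteq W_i$ is written down correctly; the dimension-counting argument then closes both cases at once.
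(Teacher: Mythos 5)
Your argument is correct and is essentially the same as the paper's: both use the fact that across an identity map the submodule condition forces the dimension not to drop in the direction of the arrow, while complementarity forces it not to rise, so the summand dimensions are constant on $[b,d]$ and one summand must vanish. You have merely unpacked the paper's single sentence into the two explicit inclusion cases and the dimension count, which is a fine and equivalent presentation.
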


\begin{proof}
Suppose $\Ii(b,d) = \Vv \oplus \Ww$ and consider two adjacent terms~$\Ff$ connected by an identity map. Since $\Vv, \Ww$ are submodules, the dimensions of $\Vv$ and $\Ww$ cannot decrease in the direction of the map; nor, since they are complements, can they increase. Thus $\dim(V_i)$ and $\dim(W_i)$ are constant over $b \leq i \leq d$, and in particular one of $\Vv,\Ww$ must be zero.
\end{proof}

Here is the foundation stone for the theory of zigzag persistence.

\begin{Theorem}[Gabriel]
\label{thm:interval}
The indecomposable $\tau$-modules are precisely the intervals $\Ii(b,d)$, where $1 \leq b \leq d \leq n = \operatorname{length}(\tau)$. Equivalently, every $\tau$-module can be written as a direct sum of intervals.
\end{Theorem}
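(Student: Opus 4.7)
The plan is to induct on the length $n$ of $\tau$. When $n = 1$, a $\tau$-module is simply a finite-dimensional vector space, which decomposes into a direct sum of copies of $\Ii(1,1)$. For the inductive step, suppose $\tau$ has length $n \geq 2$, and let $\Vv'$ be the truncation of $\Vv$ to length $n-1$, obtained by discarding $V_n$ together with the last map $p_{n-1}$. By the inductive hypothesis, $\Vv' = \bigoplus_j \Ii_{\tau'}(b_j, d_j)$, which furnishes a labelled basis of each $V_i$ ($i < n$) compatible with the maps $p_1, \dots, p_{n-2}$. The task is to adjust this decomposition, if necessary, so that it combines with a suitable basis of $V_n$ to give a decomposition of $\Vv$ into type-$\tau$ intervals. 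Intervals with $d_j < n-1$ are unaffected by $p_{n-1}$ and carry over unchanged.

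In the forward case $p_{n-1} = f_{n-1}: V_{n-1} \to V_n$, restrict attention to the subspace $U \subseteq V_{n-1}$ spanned by the generators of intervals with $d_j = n-1$. Within each isotypic class of $U$ (intervals of equal birth time), the generators are freely interchangeable by automorphisms of $\Vv'$, so I would choose a basis of $U$ adapted to $\Ker(f_{n-1}|_U)$ and compatible with the isotypic grading: kernel generators retain their intervals, while generators mapping to a fixed complement of the kernel have their intervals extended by one step to terminate at $n$. Any complement of $f_{n-1}(U)$ inside $V_n$ yields new $\Ii(n,n)$ summands. In the backward case $p_{n-1} = g_{n-1}: V_n \to V_{n-1}$, decompose $V_n = \Ker(g_{n-1}) \oplus C$ with $g_{n-1}|_C$ injective; the kernel contributes $\Ii(n,n)$ summands. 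For each basis vector $c$ of $C$, let $v = g_{n-1}(c) \in V_{n-1}$; one then wishes to install a new summand $\Ii_\tau(b, n)$ whose $V_{n-1}$-generator is $v$, where $b$ is the smallest index to which $v$ can be lifted backward through the zigzag of $\Vv'$ as a consistent chain of vectors.

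The main obstacle is realising this new summand as an honest direct summand, not merely a submodule, in the backward case. One must exhibit an automorphism of $\Vv'$ which replaces the generator of some existing $\Ii_{\tau'}(b, n-1)$ summand by $v$, for the appropriate birth time $b$. The required automorphism exists because nonzero morphisms $\Ii_{\tau'}(b', d') \to \Ii_{\tau'}(b, d)$ are controlled by a simple combinatorial condition determined by the orientations of $\tau'$, and in the $A_n$ setting they provide exactly enough flexibility to absorb $v$ into a suitable generator without disturbing the rest of $\Vv'$. The underlying argument is a systematic basis reduction organised by interval type, analogous to Gaussian elimination; the bookkeeping is elementary but intricate, and is precisely the computation carried out by the algorithmic proof in Section~\ref{sec:algorithms}.
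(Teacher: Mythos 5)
Your backward-case birth-time criterion is incorrect: you assign to $v = g_{n-1}(c)$ the smallest index to which it ``can be lifted backward through the zigzag of $\Vv'$ as a consistent chain of vectors,'' but that is exactly condition~(2) of the \emph{Caution} in Section~\ref{subsec:subsum}. It detects a \emph{submodule} of $\Vv'$ isomorphic to an interval, not a \emph{summand}, and Examples~\ref{ex:caution1} and~\ref{ex:caution2} are there precisely to show these differ. Concretely, take $\tau = gfg$ and let $\Vv$ be
\[
\Ff \gmap{1} \Ff^2 \fmap{2} \Ff \gmap{3} \Ff,
\qquad g_1(x,y)=x,\quad f_2(x,y)=y,\quad g_3=\mathrm{id}.
\]
The truncation $\Vv'=\Vv[1,3]$ is the module of Example~\ref{ex:caution1}, with $\Pers(\Vv')=\{[1,2],[2,3]\}$. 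Here $\Ker(g_3)=0$ and $v=g_3(1)=1\in V_3$. Your criterion lifts $v$ back through $(1,1)\in V_2$ (since $f_2(1,1)=1$) and $1=g_1(1,1)\in V_1$, all nonzero, so it gives $b=1$ and you would install $\Ii(1,4)$. But $\Pers(\Vv)=\{[1,2],[2,4]\}$; and since no decomposition of $\Vv'$ contains an $\Ii_{\tau'}(1,3)$ summand (by Krull--Schmidt, Proposition~\ref{prop:krull}), the automorphism of $\Vv'$ you then invoke cannot exist. The correct birth time is read off the right-filtration: $\rf(\Vv')=(0,\Ff,\Ff,\Ff)$ with $\bt(\tau')=(2,1,3)$, and $v$ falls in the first stage, so $b=2$.

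The deeper problem is that the load-bearing claim --- that an automorphism of $\Vv'$ always exists absorbing $v$ into a generator of the appropriate interval, ``because nonzero morphisms $\Ii_{\tau'}(b',d')\to\Ii_{\tau'}(b,d)$ are controlled by a simple combinatorial condition'' --- is asserted rather than argued, and it is where all the difficulty lives. The paper isolates exactly this step as the Decomposition Lemma (Lemma~\ref{lem:decomp}), which is valid only for \emph{streamlined} modules, and the paper therefore takes care to split $\Vv$ into streamlined pieces first (Lemma~\ref{lem:cascade}) before invoking the right-filtration functor (Theorem~\ref{thm:streamlinedremak}). Your forward case has the same softness in milder form: a basis of $U$ ``adapted to $\Ker(f_{n-1}|_U)$ and compatible with the isotypic grading'' generally requires changing the isotypic \emph{decomposition} itself, not merely re-basing within isotypes, and showing a compatible decomposition exists is the content of Proposition~\ref{prop:complement}. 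Working through the right-filtration on $V_{n-1}$ instead of directly with interval generators is what makes these steps tractable and is the substance of the paper's proof.
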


\begin{proof}
This is the simplest special case of Gabriel's theorem, for the graphs $A_n$. The original reference (in German) is~\cite{Gabriel_1972}. See~\cite{Derksen_Weyman_2005} for an accessible overview.
\end{proof}

Thus, any $\tau$-module can be described completely up to isomorphism as an unordered list of intervals $[b,d]$, which correspond to its indecomposable summands.
This is in exact accordance with the special case of ordinary persistence, where the result is comparatively easy to prove: it is simply the classification of finitely-generated graded modules over the polynomial ring $\Ff[t]$ (see~\cite{Zomorodian_Carlsson_2005}).

The philosophical point is that the decomposition theory of graph representations is somewhat independent of the orientation of the graph edges (see Kac~\cite{Kac_1980}). Even in our case this is surprising, because there is no obvious  congruence between persistence modules and zigzag modules of an arbitrary type~$\tau$. However, if we accept this principle, then the generalisation from ordinary persistence to zigzag persistence is not surprising: interval decomposition for persistence modules implies interval decomposition for zigzag modules.

We will devote much of this paper to constructing a stand-alone proof of Theorem~\ref{thm:interval}. This provides technical support towards our two main goals: to provide algorithms for computing the interval summands of a given $\tau$-module; and to make rigorous statements about the output of those algorithms.

\subsection{Zigzag persistence.}
\label{subsec:subsum}

We now define zigzag persistence and develop some of its elementary properties.

\begin{Definition}
Let $\Vv$ be a zigzag module (of arbitrary type). The \textbf{zigzag persistence} of~$\Vv$ is defined to be the multiset
\[
\Pers(\Vv) =
\left\{ [b_j, d_j] \subseteq \{1, \dots, n\} \mid  j = 1, \dots, N \right\}
\]
of integer intervals derived from a decomposition $\Vv \cong \Ii(b_1,d_1) \oplus \cdots \oplus \Ii(b_N, d_N)$. The Krull--Schmidt principle asserts that this definition is independent of the decomposition.
\end{Definition}

Graphically, $\Pers(\Vv)$ can be represented as a set of lines measured against a single axis with labels $\{1, \dots, n\}$ (the \textbf{barcode}), or as a multiset of points in~$\mathbb{R}^2$ lying on or above the diagonal in the positive quadrant (the \textbf{persistence diagram}). See Figure~\ref{fig:barpd} for an example presented in each style.

\begin{figure}
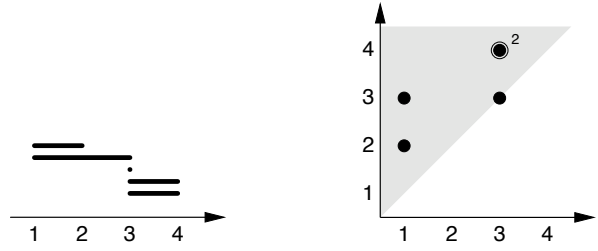

\centerline{
\includegraphics[scale=0.5]{bar_example.pdf}
\qquad\qquad
\includegraphics[scale=0.5]{pd_example.pdf}
}
\caption{Barcode (left) and persistence diagram (right) representations of the persistence $\{ [1,2], [1,3], [3,3], [3,4], [3,4] \}$ of a zigzag module of length~4.}
\label{fig:barpd}
\end{figure}

\begin{Remark}
In the special case of persistence modules, this agrees with the standard treatment (see~\cite{Edelsbrunner_L_Z_2002, Zomorodian_Carlsson_2005}) {except} in the following particular: the closed integer intervals $[b_j,d_j] \subseteq \{1, \dots, n\}$ are replaced by half-open real intervals $[b_j,d_j+1) \subset \mathbb{R}$ in the standard treatment. This is particularly natural when the indexing parameter is continuous: an interval $[b,d)$ indicates a feature born at time~$b$ that survives right up to, but vanishes at, time~$d$.
Our convention is motivated by the desire to maintain symmetry between the forward and backward directions.
We advise the reader to take particular care in handling the different conventions.
\end{Remark}

The transition from a zigzag module to its interval decomposition presents certain hazards which are not present in the case of persistence modules. We now draw attention to these hazards.

\begin{Definition}
Let $\Vv$ be a zigzag module and let $\Vv[p,q]$ denote the restriction of~$\Vv$ to the index set $p \leq i \leq q$. A \textbf{feature} of~$\Vv$ over the time interval $[p,q]$ is a summand of $\Vv[p,q]$ isomorphic to $\Ii(p,q)$. 
\end{Definition}

With persistence modules, there are several equivalent ways to recognise the existence of a feature. Here is a sample result.

\begin{Proposition}
\label{prop:intuitions}
Let $\Vv$ be a persistence module of length~$n$, and let $1 \leq p \leq q \leq n$. The following are equivalent:
\begin{enumerate}
\item
The composite map $V_p \to V_q$ is nonzero.
\item
There exist nonzero elements $x_i \in V_i$ for $p \leq i \leq q$, such that $x_{i+1} = f_i(x_i)$ for $p \leq i < q$.
\item
There exists a submodule of $\Vv[p,q]$ isomorphic to $\Ii(p,q)$.
\item
There exists a summand of $\Vv[p,q]$ isomorphic to $\Ii(p,q)$, i.e.\ a feature over $[p,q]$.
\end{enumerate}
\end{Proposition}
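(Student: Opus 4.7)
The plan is to establish the cycle $(1)\Rightarrow(2)\Rightarrow(3)\Rightarrow(4)\Rightarrow(1)$, where the only nontrivial step is $(3)\Rightarrow(4)$, which we handle with Gabriel's theorem.

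First I would prove $(1)\Rightarrow(2)$: assuming the composite $F = f_{q-1}\circ\cdots\circ f_p : V_p \to V_q$ is nonzero, pick any $x_p \in V_p$ with $F(x_p)\neq 0$ and define $x_i = f_{i-1}\circ\cdots\circ f_p(x_p)$ for $p < i \leq q$. Each $x_i$ is nonzero because it is mapped by the remaining forward maps to the nonzero $x_q$, and by construction $x_{i+1} = f_i(x_i)$. Next, $(2)\Rightarrow(3)$ is straightforward: set $W_i = \Span(x_i)$ for $p \leq i \leq q$; the condition $f_i(x_i) = x_{i+1}$ ensures $f_i(W_i) \subseteq W_{i+1}$, so $\Ww \leq \Vv[p,q]$, and since each $W_i$ is one-dimensional with identity maps between consecutive pairs, $\Ww \cong \Ii(p,q)$. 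Finally $(4)\Rightarrow(1)$ is immediate: a summand $\cong \Ii(p,q)$ of $\Vv[p,q]$ provides nonzero vectors at positions $p$ and $q$ linked by the composite, so $V_p \to V_q$ is nonzero.

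The central step is $(3)\Rightarrow(4)$. Here I would apply Theorem~\ref{thm:interval} to decompose
\[
\Vv[p,q] = \Ii(b_1,d_1) \oplus \cdots \oplus \Ii(b_N,d_N),
\]
where necessarily $p \leq b_j \leq d_j \leq q$. On each summand $\Ii(b_j,d_j)$ the composite map $V_p \to V_q$ is an isomorphism $\Ff \to \Ff$ when $b_j = p$ and $d_j = q$, and zero otherwise. Since the existence of a submodule isomorphic to $\Ii(p,q)$ already implies $(1)$ (a nonzero element in position $p$ must map forward to a nonzero element in position $q$), the direct-sum decomposition of the composite shows that at least one summand must satisfy $b_j = p$ and $d_j = q$. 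That summand is a feature over $[p,q]$.

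The main obstacle is of course $(3)\Rightarrow(4)$: passing from a submodule to a summand is genuine content (recall the example $\Ii(1,2)$ inside $\Ff \xrightarrow{1} \Ff$ being a submodule without being a summand of anything interesting), and we buy our way across this gap by paying with Gabriel's theorem. One could in principle give a direct inductive argument constructing complements level by level, but it would essentially re-derive the interval decomposition on a slice, so invoking Theorem~\ref{thm:interval} is both shorter and more natural.
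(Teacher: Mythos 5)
Your proposal is correct and matches the paper's argument essentially step for step: the easy implications are handled the same way, and the key passage from submodule to summand is achieved in both by combining the nonvanishing of the composite $V_p\to V_q$ with an interval decomposition of $\Vv[p,q]$ (the paper proves $(1)\Rightarrow(4)$ directly; you route $(3)\Rightarrow(1)\Rightarrow(4)$, which is the same content). The only cosmetic difference is that the paper cites the classical $\Ff[t]$-module classification for persistence modules rather than the general Theorem~\ref{thm:interval}, but both are valid here.
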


\begin{proof}
It is easy to verify that (1), (2), (3) are equivalent. For (1)~$\Rightarrow$~(2), begin by choosing $x_p \in V_p$ that maps to a nonzero element in $V_q$, and let $x_i$ be the image of $x_p$ in $V_i$. For (2)~$\Rightarrow$~(3), define $\Ii$ by $I_i = \Span(x_i)$. For (3)~$\Rightarrow$~(1), note that the restriction $I_p \to I_q$ is nonzero.

Clearly (4)~$\Rightarrow$~(3). We now show that (1)~$\Rightarrow$~(4). Consider an interval decomposition $\Vv[p,q] = \Ii(b_1, d_1) \oplus \dots \oplus \Ii(b_N,d_N)$. On each summand, the map $I_p(b_j,d_j) \to I_q(b_j,d_j)$ is zero unless $b_j = p$ and $d_j = q$. Thus at least one of the summands is isomorphic to $\Ii(p,q)$.
\end{proof}

The intuitions supported by Proposition~\ref{prop:intuitions} break down in the general case.

\begin{Caution}
Let $\Vv$ be a zigzag module of arbitrary type. Statement~(1) has no clear interpretation at this stage (something can be said in terms of the right-filtration functor of Section~\ref{sec:zigzag2filt}). Consider the following statements:
\begin{enumerate}
\setcounter{enumi}{1}
\item
There exist nonzero elements $x_i \in V_i$ for $p \leq i \leq q$, such that $x_{i+1} = f_i(x_i)$ or $x_i = g_i(x_{i+1})$ (whichever is applicable) for $p \leq i < q$.
\item
There exists a submodule of $\Vv[p,q]$ isomorphic to $\Ii(p,q)$.
\item
There exists a summand of $\Vv[p,q]$ isomorphic to $\Ii(p,q)$, i.e.\ a feature over $[p,q]$.
\end{enumerate}
It is easy to verify that $(2) \Leftrightarrow (3)$ and that (4) implies $(2), (3)$. However, the next two examples demonstrate that $(2), (3)$ do not in general imply~(4).
\end{Caution}

\begin{Example}
\label{ex:caution1}
Let $\tau = gf$ and consider the $\tau$-module $\Vv$ defined as follows:
\[
\begin{diagram}
\node{\Ff}
\node{\Ff^2}
\arrow{w,t}{g_1}
\arrow{e,t}{f_2}
\node{\Ff}
\\
\node{x}
\node{(x,y)}
\arrow{w,T}
\arrow{e,T}
\node{y}
\end{diagram}
\]
The interval decomposition is $\Vv = \Ii(1,2) \oplus \Ii(2,3)$, where the summands are
\[
\begin{diagram}
\node{\Ff}
\node{\Ff \oplus 0}
\arrow{w,t}{}
\arrow{e,t}{}
\node{0}
\\
\node{x}
\node{(x,0)}
\arrow{w,T}
\node{}
\end{diagram}
\]
and
\[
\begin{diagram}
\node{0}
\node{0 \oplus \Ff}
\arrow{w,t}{}
\arrow{e,t}{}
\node{\Ff}
\\
\node{}
\node{(0,y)}
\arrow{e,T}
\node{y}
\end{diagram}
\]
respectively. If this example appeared in a statistical topology setting,  the feature corresponding to the generator of the $\Ff$ at~$V_1$ would be regarded as unrelated to the feature corresponding to the generator of the $\Ff$ at~$V_3$.

On the other hand, $\Vv$ does have a submodule (in fact, many submodules) isomorphic to $\Ii(1,3)$. Indeed, let $\Delta = \{ (x,x) \mid x \in \Ff\}$ denote the diagonal subspace of $\Ff^2$.
Then
\[
\begin{diagram}
\node{\Ff}
\node{\Delta}
\arrow{w}
\arrow{e}
\node{\Ff}
\\
\node{x}
\node{(x,x)}
\arrow{w,T}
\arrow{e,T}
\node{x}
\end{diagram}
\]
is a submodule $\Ww \leq \Vv$ isomorphic to $\Ii(1,3)$. The quotient $\tau$-module $\Vv / \Ww$ is isomorphic to $\Ii(2,2)$ but $\Ww$ has no complementary $\tau$-module in~$\Vv$. Indeed, that would contradict the Krull--Schmidt theorem. More concretely, any complement of~$\Ww$ must be isomorphic to $(0 \lmap \Ff \rmap 0)$, but that would require a 1-dimensional subspace of $\Ker(g_1) \cap \Ker(f_2) = 0$.
\end{Example}

\begin{Example}
\label{ex:caution2}
We can extend the previous example to arbitrary length. Consider the type $\tau = gf\dots gf = 
(gf)^n$, of length $2n+1$. Let $\Vv$ be the $\tau$-module
\[
\Ff \stackrel{\pi_1}{\lmap}
\Ff^2 \stackrel{\pi_2}{\rmap}
\Ff \stackrel{\pi_1}{\lmap}
\cdots
\stackrel{\pi_2}{\rmap}
\Ff \stackrel{\pi_1}{\lmap}
\Ff^2 \stackrel{\pi_2}{\rmap}
\Ff,
\]
where $\pi_1(x,y) = x$, and $\pi_2(x,y) = y$.
Then $\Vv$ is isomorphic to a sum of short intervals
\[
\Ii(1,2) \oplus
\left\{ \Ii(2,4) \oplus \dots \oplus \Ii(2n-2,2n) \right\}
\oplus \Ii(2n,2n+1)
\]
but it has a submodule
\[
\Ff \stackrel{}{\lmap}
\Delta \stackrel{}{\rmap}
\Ff \stackrel{}{\lmap}
\cdots
\stackrel{}{\rmap}
\Ff \stackrel{}{\lmap}
\Delta \stackrel{}{\rmap}
\Ff
\]
isomorphic to the long interval $\Ii(1,2n+1)$.
\end{Example}

\begin{Moral}
In zigzag persistence it is necessary to respect the distinction between submodules and summands. Features are defined in terms of summands; never submodules.
\end{Moral}

We have defined features in terms of a chosen subinterval $[p,q]$. Features behave as expected when zooming to a larger or smaller window of observation. The following proposition illustrates what we mean.

\begin{Proposition}
\label{prop:restriction1}
Let $\Vv$ be a zigzag module of length~$n$ and let $1 \leq p \leq q \leq n$. The following statements are equivalent.
\begin{enumerate}
\item
There exists a summand of $\Vv[p,q]$ isomorphic to $\Ii(p,q)$, i.e.\ a feature over $[p,q]$.
\item
There exists a summand of $\Vv$ isomorphic to $\Ii(p',q')$, for some 
$[p',q'] \supseteq [p,q]$.
\end{enumerate}
Indeed, there is a bijection between intervals $[p,q]$ in~$\Pers(\Vv[p,q])$ and intervals $[p',q'] \supseteq [p,q]$ in $\Pers(\Vv)$.
\end{Proposition}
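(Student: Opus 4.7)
The plan is to derive both implications, and the bijection, directly from Gabriel's theorem together with the Krull--Remak--Schmidt principle, by analysing how an interval decomposition of $\Vv$ behaves under restriction to a sub-interval of indices.

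First, I would observe that the restriction operation $\Vv \mapsto \Vv[p,q]$ acts coordinate-wise on spaces and maps and therefore commutes with finite direct sums. Applying Theorem~\ref{thm:interval} to decompose $\Vv = \Ii(b_1,d_1) \oplus \dots \oplus \Ii(b_N,d_N)$, I get
\[
\Vv[p,q] \;=\; \bigoplus_{j=1}^{N} \Ii(b_j,d_j)[p,q].
\]
Each restricted summand $\Ii(b_j,d_j)[p,q]$ is trivial to identify: it is zero if $[b_j,d_j] \cap [p,q] = \emptyset$, and otherwise it is the interval module $\Ii(\max(b_j,p),\min(d_j,q))$ on the index set $\{p,\dots,q\}$. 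In particular, $\Ii(b_j,d_j)[p,q] \cong \Ii(p,q)$ if and only if $b_j \leq p$ and $d_j \geq q$, i.e.\ $[b_j,d_j] \supseteq [p,q]$.

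Next, the implication (2)~$\Rightarrow$~(1) is immediate: a summand $\Ii(p',q')$ of $\Vv$ with $[p',q'] \supseteq [p,q]$ restricts to a summand $\Ii(p,q)$ of $\Vv[p,q]$, since restriction preserves direct sum complements. For (1)~$\Rightarrow$~(2), the display above exhibits the restricted decomposition of $\Vv[p,q]$ as a direct sum of intervals, in which the number of $\Ii(p,q)$ summands equals $\#\{j : [b_j,d_j] \supseteq [p,q]\}$. By Proposition~\ref{prop:krull} applied to $\Vv[p,q]$, this count is an invariant of $\Vv[p,q]$, equal to the multiplicity of $[p,q]$ in $\Pers(\Vv[p,q])$. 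Hence if there is at least one such summand in $\Vv[p,q]$, there must be at least one index $j$ with $[b_j,d_j] \supseteq [p,q]$, which gives the required summand of $\Vv$.

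The bijection claim follows from the same bookkeeping: the correspondence sending each summand $\Ii(b_j,d_j)$ of $\Vv$ with $[b_j,d_j] \supseteq [p,q]$ to its restriction $\Ii(p,q)$ in $\Vv[p,q]$ matches the two multisets in question, with well-definedness and inverse provided by Krull--Schmidt uniqueness applied to $\Vv$ and $\Vv[p,q]$ respectively. The main subtlety (rather than obstacle) is remembering that only the summands whose intervals \emph{contain} $[p,q]$ contribute, and distinguishing these from the shorter intervals $\Ii(\max(b_j,p),\min(d_j,q))$ that still begin at $p$ or end at $q$ but do not equal $\Ii(p,q)$; this is handled by the explicit description of the restricted intervals above.
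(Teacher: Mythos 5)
Your proposal is correct and follows essentially the same route as the paper: restrict an interval decomposition of $\Vv$ to $[p,q]$, observe that a summand $\Ii(b_j,d_j)$ restricts to $\Ii(p,q)$ precisely when $[b_j,d_j]\supseteq[p,q]$, and invoke Krull--Schmidt to make the multiplicity count well-defined. The paper's proof is terser but identical in substance; it even acknowledges the implicit use of Krull--Schmidt in a remark immediately afterward.
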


\begin{proof}
Consider an interval decomposition $\Vv = \Ii(b_1, d_1) \oplus \cdots \oplus \Ii(b_N, d_N)$. By restriction, this induces an interval decomposition of $\Vv[p,q]$ into intervals $\Ii(b_j,d_j)[p,q]$. This induces the claimed bijection, because $[b_j,d_j]$ restricts to $[p,q]$ if and only if $[b_j,d_j] \supseteq [p,q]$.
\end{proof}

Operating invisibly in this proof is the Krull--Schmidt principle, which allows us to select the interval decompositions most convenient to us when calculating $\Pers(\Vv)$ and $\Pers(\Vv[p,q])$.

\begin{Remark}
Sometimes it is useful to reduce the resolution of $\Pers(V)$. Let $K \subset \{1, \dots, n\}$ be any subset. We define the \textbf{restriction} of $\Pers(\Vv)$ to~$K$ to be the multiset
\[
\Pers(\Vv)|_K =
\left\{
I \cap K
\mid
I \in \Pers(\Vv),\, I \cap K \not= \emptyset
\right\}.
\]
For instance, Proposition~\ref{prop:restriction1} amounts to the observation that $\Pers(\Vv[p,q]) = \Pers(\Vv)|_{[p,q]}$.
\end{Remark}

\section{From Zigzag Modules to Filtrations}
\label{sec:zigzag2filt}

\subsection{The right-filtration operator}
\label{subsec:rfilt}

Our strategy is to understand (and construct) decompositions of a $\tau$-module~$\Vv$ by an iterative process, moving from left to right and retaining the necessary information at each stage. The bulk of this information is encoded as a filtration on the rightmost vector space $V_n$.

\begin{Definition}
\label{def:rfilt}
The {\bf right-filtration} $\rf(\Vv)$ of a $\tau$-module $\Vv$ of length~$n$ takes the form
\[
\rf(\Vv) = (R_0, R_1, \dots, R_n),
\]
where the~$R_i$ are subspaces of~$V_n$ satisfying the inclusion relations
\[
0 = R_0 \leq R_1 \leq \dots \leq R_n = V_n.
\]
$\rf(\Vv)$ is defined recursively as follows.

Base case:
\begin{itemize}
\item
If $\Vv$ has length~1, then $\rf(\Vv) = (0, V_1)$.
\end{itemize}

Recursive step. Suppose we have already defined $\rf(\Vv) = (R_0, R_1, \dots, R_n)$:
\begin{itemize}
\item
If $\Vv^+$ is $\Vv \fmap{n} V_{n+1}$, then $\rf(\Vv^+) = (f_n(R_0), f_n(R_1), \dots, f_n(R_n), V_{n+1})$.
\item
If $\Vv^+$ is $\Vv \gmap{n} V_{n+1}$, then $\rf(\Vv^+) = (0, g_n^{-1}(R_0), g_n^{-1}(R_1), \dots, g_n^{-1}(R_n))$.
\end{itemize}
To verify that $\rf(\Vv^+)$ in the two inductive cases is a filtration of the specified form, note that $R_i \leq R_{i+1}$ implies that $f_n(R_i) \leq f_n(r_{i+1})$ in the first case, and $g_n^{-1}(R_i) \leq g_n^{-1}(R_{i+1})$ in the second case. Moreover $f_n(R_0) = f_n(0) = 0$, and $g_n^{-1}(R_n) = g_n^{-1}(V_n) = V_{n+1}$.
\end{Definition}

\begin{Example} Here are the right-filtrations for the two length-2 types:
\begin{eqnarray*}
\rf(V_1 \fmap{1} V_2)
 &=& (0, f_1(V_1), V_2)
\\
\rf(V_1 \gmap{1} V_2)
 &=& (0, g_1^{-1}(0), V_2)
\end{eqnarray*}
\end{Example}

\begin{Example} Here are the right-filtrations for the four length-3 types:
\begin{eqnarray*}
\rf(V_1 \fmap{1} V_2 \fmap{2} V_3)
 &=& (0, f_2 f_1(V_1), f_2(V_2), V_3)
\\
\rf(V_1 \fmap{1} V_2 \gmap{2} V_3)
 &=& (0, g_2^{-1}(0), g_2^{-1} f_1(V_1), V_3)
\\
\rf(V_1 \gmap{1} V_2 \fmap{2} V_3)
 &=& (0, f_2 g_1^{-1}(0), f_2(V_2), V_3)
\\
\rf(V_1 \gmap{1} V_2 \gmap{2} V_3)
 &=& (0, g_2^{-1}(0), g_2^{-1} g_1^{-1} (0), V_3)
\end{eqnarray*}
See Figure~\ref{fig:rfilt3} for a schematic representation.
\end{Example}

\begin{figure}
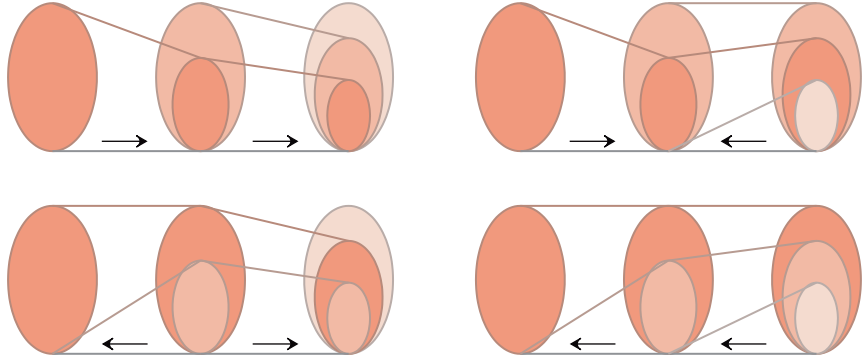

\centerline{
\includegraphics[scale=0.5]{zigfilt1_ff.pdf}
\qquad
\includegraphics[scale=0.5]{zigfilt1_fg.pdf}
}
\vspace{3ex}
\centerline{
\includegraphics[scale=0.5]{zigfilt1_gf.pdf}
\qquad
\includegraphics[scale=0.5]{zigfilt1_gg.pdf}
}
\caption{Forward propagation of the right-filtration, illustrated for the four types $ff$, $fg$, $gf$, $gg$ of length~3.}
\label{fig:rfilt3}
\end{figure}

\begin{Remark}
In the examples above, it is not difficult to see that $\rf(\Vv)$ comprises all the subspaces of~$V_n$ that are naturally definable in terms of the maps $p_i$.
\end{Remark}

Each of the $n$~subquotients $R_{i}/R_{i-1}$ carries information dating back to some earliest $V_j$ in the sequence of vector spaces.

\begin{Example}
\label{ex:birthtimef}
The module $V_1 \fmap{1} V_2$ has right-filtration $(0, f_1(V_1), V_2)$. The first subquotient $f_1(V_1)/0 = f_1(V_1)$ corresponds to vectors born at time~1 which survive to time~2. The second subquotient $V_2/f_1(V_1)$ corresponds to vectors which appear only at time~2. 
\end{Example}

\begin{Example}
\label{ex:birthtimeg}
The module $V_1 \gmap{1} V_2$ has right-filtration $(0, g_1^{-1}(0), V_2)$. The first subquotient $g_1^{-1}(0)$ corresponds to vectors at time~2 which are destroyed when mapping back to time~1. The second subquotient $V_2 / g_1^{-1}(0)$ is isomorphic to $g_1(V_2)$ and records those vectors which survive from time~2 back to time~1.
\end{Example}

\begin{Definition}
\label{def:birthtime}
The {\bf birth-time index} $\bt(\tau) = (b_1, b_2, \dots, b_n)$ is a vector of integers $b_i$ which indicate the birth-times associated with the subquotients $R_i / R_{i-1}$ of the right-filtration of a $\tau$-module. This is defined recursively as follows.

Base case:
\begin{itemize}
\item
If $\tau$ is empty (i.e. $\Vv$ has length~1) then $\bt(\tau) = (1)$.
\end{itemize}

Recursive step. Suppose we have already defined $\bt(\tau) = (b_1, b_2, \dots, b_n)$:
\begin{itemize}
\item
If $\tau^+$ is $\tau f$ then $\bt(\tau^+) = (b_1, \dots, b_n, n+1)$.
\item
If $\tau^+$ is $\tau g$ then $\bt(\tau^+) = (n+1, b_1, \dots, b_n)$.
\end{itemize}
\end{Definition}

\begin{Example}
At length~2 we have $\bt(f) = (1,2)$ whereas $\bt(g) = (2,1)$. This is consonant with the discussion in Examples $\ref{ex:birthtimef}$ and~$\ref{ex:birthtimeg}$.
\end{Example}

\begin{Example}
Here are the birth-time indices for the types of length~3.
\[
\bt(ff) = (1,2,3),
\qquad
\bt(fg) = (3,1,2),
\qquad
\bt(gf) = (2,1,3),
\qquad
\bt(gg) = (3,2,1).
\]
\end{Example}

In summary, the information in a $\tau$-module~$\Vv$ which survives to time~$n$ is encoded as a filtration $R(\Vv)$ on~$V_n$. The `age' of the information at each level of the filtration (i.e.\ at each subquotient) is recorded in the birth-time index~$\bt(\tau)$.

For a simplified but precise version of this last claim, we now calculate the right-filtrations of interval $\tau$-modules. In the filtration specified in the following lemma, $J_i/J_{i-1} = \Ff$ is the only non-zero subquotient, corresponding to the birth time~$b_i$.

\begin{Lemma}
\label{lem:intervalformula}
Let $\tau$ be a type of length~$n$, with $\bt(\tau) = (b_1, b_2, \dots, b_n)$. For each $i = 1, 2,  \dots, n$, we have an isomorphism
\[
\rf(\Ii_\tau(b_i,n)) = \Jj(i,n),
\]
where $\Jj(i,n) = (J_0, J_1, \dots, J_n)$ is the filtration on~$\Ff$ defined by
\[
J_0 = \dots = J_{i-1} = 0; \quad
J_i = \dots = J_n = \Ff.
\]
\end{Lemma}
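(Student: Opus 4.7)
The plan is to prove the lemma by induction on the length $n$ of the type $\tau$, mirroring the recursive definitions of $\rf$ and $\bt$. In the base case $n=1$, the type $\tau$ is empty, $\bt(\tau)=(1)$, and we need only check $i=1$: here $\Ii_\tau(1,1)=\Ff$ and the base clause of Definition~\ref{def:rfilt} gives $\rf(\Ff)=(0,\Ff)=\Jj(1,1)$.

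For the inductive step, assume the lemma holds for all types of length $n$, and let $\tau^+$ have length $n+1$. Write $\Vv = \Ii_{\tau^+}(b_i^+, n+1)$ and let $\Vv' = \Vv[1,n]$ be its restriction to the first $n$ positions; $\Vv'$ is itself an interval $\tau$-module (possibly zero). The structure of the proof is to split into four subcases according to the final symbol of $\tau^+$ (namely $f$ or $g$) and according to whether the birth time $b_i^+$ equals the newly introduced index $n+1$ or is inherited from $\bt(\tau)$. In each subcase, I identify $\Vv'$ as a specific interval $\tau$-module, apply the inductive hypothesis to compute $\rf(\Vv')$, and then apply the correct recursive rule from Definition~\ref{def:rfilt} to propagate the filtration one step to the right.

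Concretely: if $\tau^+=\tau f$ and $i\le n$, then $b_i^+=b_i$, $\Vv'=\Ii_\tau(b_i,n)$, the terminal map $f_n$ is the identity $\Ff\to\Ff$, and applying the forward rule to $\rf(\Vv')=\Jj(i,n)$ and appending $V_{n+1}=\Ff$ yields $\Jj(i,n+1)$. If $\tau^+=\tau f$ and $i=n+1$, then $\Vv'$ is zero, so $\rf(\Vv')$ is all zeros and appending $V_{n+1}=\Ff$ gives $\Jj(n+1,n+1)$. The two $g$-subcases are analogous: if $\tau^+=\tau g$ and $i=1$ (so $b_1^+=n+1$), then $\Vv'$ is zero and the terminal map $g_n\colon \Ff\to 0$ has $g_n^{-1}(0)=\Ff$, producing $\Jj(1,n+1)$ after prepending the new zero term; if $i\ge 2$, then $b_i^+=b_{i-1}$, $\Vv'=\Ii_\tau(b_{i-1},n)$, the terminal map is the identity on $\Ff$, and the backward rule turns $\Jj(i-1,n)$ into $\Jj(i,n+1)$ after prepending~$0$.

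The main obstacle is keeping the indexing consistent across the two rules, since the backward case prepends a new entry to $\bt$ (shifting old indices up by one) whereas the forward case appends. Once the case analysis is set up cleanly with this shift in mind, each subcase reduces to a one-line verification using either $f_n = \operatorname{id}_\Ff$, $g_n = \operatorname{id}_\Ff$, or one of the maps being zero; so the real content of the proof is purely organisational.
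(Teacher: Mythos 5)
Your proof is correct and follows essentially the same route as the paper's: induction on the length of the type, splitting into the cases $\tau f$ and $\tau g$, and in each further splitting on whether $b_i^+$ is the new index $n+1$ or inherited. The subcase computations (identity map versus zero map, append versus prepend) are exactly the ones the paper carries out.
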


\begin{Remark}
We refer to the $\Jj(b,n)$ also as \emph{intervals} (but now in the category of filtered vector spaces).
\end{Remark}

\begin{proof}
This is a straightforward calculation by induction on~$\tau$. For the base case, $\tau$ is empty and $\bt(\tau) = (1)$. Then $\rf(\Ii(1,1)) = (0, \Ff) = \Jj(1,1)$ as claimed.
Now suppose the result is known for $\tau$, with $\bt(\tau) = (b_1, \dots, b_n)$. Suppose $\tau^+ = \tau f$ or~$\tau g$. In both cases, write $\bt(\tau^+) = (b_1^+, \dots, b_{n+1}^+)$.

Case~$f$: Suppose that $1 \leq i \leq n$; then $b_i^+ = b_i$ and therefore
\[
\Ii_{\tau^+}(b_i^+,n+1) = ( \Ii_\tau(b_i,n) \stackrel{1}{\rmap} \Ff).
\]
Writing $\rf(\Ii_\tau(b_i,n)) = \Jj(i,n) = (J_0, J_1, \dots, J_n)$,
it follows that
\[
\rf(\Ii_{\tau^+}(b_i^+,n+1)) = (J_0, J_1, \dots, J_n, \Ff) = \Jj(i,n+1).
\]
For $i=n+1$, we have $b_{n+1}^+ = n+1$, and indeed
\[
\rf(\Ii_{\tau^+}(n+1,n+1)) = \rf( (\dots) \stackrel{0}{\rmap} \Ff)
= (0, \dots, 0, \Ff) = \Jj(n+1,n+1).
\]

Case~$g$: Suppose that $2 \leq i \leq n+1$; then $b_i^+ = b_{i-1}$ and therefore
\[
\Ii_{\tau^+}(b_i^+,n+1) = ( \Ii_\tau(b_{i-1},n) \stackrel{1}{\lmap} \Ff).
\]
Writing $\rf(\Ii_\tau(b_{i-1},n)) = \Jj(i-1,n) = (J_0, J_1, \dots, J_n)$,
it follows that
\[
\rf(\Ii_{\tau^+}(b_i^+,n+1)) = (0, J_0, J_1, \dots, J_n) = \Jj(i,n+1).
\]
For $i=1$, we have $b_{1}^+ = n+1$ and then
\[
\rf(\Ii_{\tau^+}(n+1,n+1)) = \rf( (\dots) \stackrel{0}{\lmap} \Ff)
= (0, \Ff, \dots, \Ff) = \Jj(1,n+1)
\]
as required.
\end{proof}

Thus the right-filtration (with the help of the birth-time index) distinguishes the different intervals $\Ii(b,n)$. It gives no information about intervals $\Ii(b,d)$ when $d < n$, since in those cases $I_n = 0$.

\begin{Example}
Consider $\tau = fgf$, so $\bt(\tau) = (b_1, b_2, b_3, b_4) = (3,1,2,4)$ and in general
\[
\rf(\, V_1 \fmap{1} V_2 \gmap{2} V_3 \fmap{3} V_4 \,)
=
(0,\, f_3 g_2^{-1}(0),\, f_3 g_2^{-1} f_1(V_1),\,  f_3(V_3)  ,\, V_4).
\]
In particular,
\[
\begin{array}{cclccccclcrcc}
\Ii(b_2,4) &=&
\rf(\,
\Ff & \stackrel{1}{\rmap}
& \Ff & \stackrel{1}{\lmap}
& \Ff & \stackrel{1}{\rmap}
& \Ff \,)
& = & (0,\, 0,\, \Ff,\, \Ff,\, \Ff)
&=& \Jj(2,4)
\\
\Ii(b_3,4) &=&
\rf(\,
0 & \stackrel{}{\rmap}
& \Ff & \stackrel{1}{\lmap}
& \Ff & \stackrel{1}{\rmap}
& \Ff \,)
& = & (0,\, 0,\, 0,\, \Ff,\, \Ff)
&=& \Jj(3,4)
\\
\Ii(b_1,4) &=&
\rf(\,
0 & \stackrel{}{\rmap}
& 0 & \stackrel{}{\lmap}
& \Ff & \stackrel{1}{\rmap}
& \Ff \,)
& = & (0,\, \Ff,\, \Ff,\, \Ff,\, \Ff)
&=& \Jj(1,4)
\\
\Ii(b_4,4) &=&
\rf(\,
0 & \stackrel{}{\rmap}
& 0 & \stackrel{}{\lmap}
& 0 & \stackrel{}{\rmap}
& \Ff \,)
& = & (0,\, 0,\, 0,\, 0,\, \Ff)
&=& \Jj(4,4)
\\
\end{array}
\]
which is in accordance with Lemma~\ref{lem:intervalformula}.
\end{Example}

\subsection{Decompositions of filtered vector spaces}
\label{subsec:filtvs}

We now consider filtered vector spaces in their own right, independently of the connection to zigzag-modules, and develop the theory of Remak decompositions. We will see later that this is the right tool for understanding Remak decompositions of zigzag modules.

A filtered vector space of depth~$n$ is a sequence $\Rr = (R_0, R_1, \dots, R_n)$ of vector spaces, where $0 = R_0 \leq R_1 \leq \dots \leq R_n$. 
The class of such objects is denoted $\filt_n$.
The right-filtration $\rf(\Vv)$ of any zigzag module~$\Vv$ of length~$n$ belongs to this class, as do the intervals $\Jj(i,n)$ defined in Lemma~\ref{lem:intervalformula}. Indeed, if $\Rr \in \filt_n$ satisfies $\dim(R_n)=1$, then $\Rr$ is isomorphic to some $\Jj(i,n)$.

\begin{Remark}
$\filt_n$ can be given the structure of a category in a natural way, but it is not quite an abelian category since morphisms do not generally have cokernels.
\end{Remark}

A filtered vector space $\Ss = (S_0, S_1, \dots, S_n)$ is a \textbf{subspace} of~$\Rr$ if $S_i \leq R_i$ for all~$i$.
It is appropriate to consider a stronger notion of subspace when dealing with direct-sum decompositions:
$\Ss$~%
%
is an \textbf{induced subspace} of~$\Rr$ if there exists a vector subspace $K \leq R_n$ such that $S_i = R_i \cap K$ for all~$i$. In that event, we write $\Ss = \Rr \cap K$. Note that $K = R_n \cap K = S_n$.

We say that $\Rr$ is the \textbf{direct sum} of two subspaces, and write $\Rr = \Ss \oplus \Tt$, if $R_i = S_i \oplus T_i$ for all~$i$.
We claim that $\Ss, \Tt$ must be induced subspaces. Note that $S_n \cap T_n = 0$. For each~$i$, then, $R_i \cap S_n$ is a subspace of $R_i$ which contains $S_i$ and meets $T_i \leq T_n$ only at~0. It follows that $R_i \cap S_n = S_i$ for all~$i$. Thus $\Ss = \Rr \cap S_n$, and symmetrically $\Tt = \Rr \cap T_n$.

The general form of a direct-sum decomposition is therefore $\Rr = (\Rr \cap K) \oplus (\Rr \cap L)$. What are the requirements on $K, L$ to make this a valid decomposition? The direct sum condition implies that $R_n = K \oplus L$ as a vector space. Moreover, given a vector space decomposition $R_n = K \oplus L$, the further condition
\[
R_i = \Span\left( R_i \cap K, R_i \cap L \right)
\;
\mbox{for all~$i$}
\]
is necessary and sufficient to guarantee $\Rr = (\Rr \cap K) \oplus (\Rr \cap L)$.

If $\Rr = \Ss \oplus \Tt$, the two subspaces $\Ss, \Tt$ are said to be complementary summands. The following fact radically simplifies the decomposition theory of filtered vector spaces.

\begin{Proposition}
\label{prop:complement}
Every induced subspace of a filtered vector space has a complementary summand.
\end{Proposition}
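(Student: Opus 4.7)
The plan is to construct the complement $L$ one filtration level at a time. Given $\Ss = \Rr \cap K$ with $K \leq R_n$, I want to produce a subspace $L \leq R_n$ such that $R_n = K \oplus L$ and such that the induced subspace $\Tt = \Rr \cap L$ satisfies $R_i = S_i \oplus T_i$ for every $i$.

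The core inductive step is as follows. Suppose I have already built $L_i \leq R_i$ with $R_i = S_i \oplus L_i$. I want to extend to $L_{i+1} \leq R_{i+1}$ with $L_i \leq L_{i+1}$ and $R_{i+1} = S_{i+1} \oplus L_{i+1}$. The key check is that $L_i \cap S_{i+1} = 0$: indeed $L_i \cap S_{i+1} = L_i \cap R_{i+1} \cap K = L_i \cap R_i \cap K = L_i \cap S_i = 0$, using $L_i \leq R_i \leq R_{i+1}$. Hence $L_i + S_{i+1}$ is a direct sum inside $R_{i+1}$, and I may choose any vector-space complement $M$ for this direct sum in $R_{i+1}$, then set $L_{i+1} = L_i \oplus M$. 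By construction $R_{i+1} = S_{i+1} \oplus L_{i+1}$ and $L_i \leq L_{i+1}$, completing the induction. Starting from $L_0 = 0$ (which is trivially a complement to $S_0 = 0$ inside $R_0 = 0$), I obtain an ascending chain $L_0 \leq L_1 \leq \dots \leq L_n$ and let $L = L_n$.

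It remains to identify $R_i \cap L$ with $L_i$, so that $\Tt = \Rr \cap L$ really has $T_i = L_i$ and decomposes $\Rr$ as required. Clearly $L_i \leq R_i \cap L$. For the other inclusion, the decomposition $R_n = K \oplus L$ gives $S_i \cap L = R_i \cap K \cap L = 0$, and every $x \in R_i$ can be written as $x = s + \ell$ with $s \in S_i$ and $\ell \in L_i$ by the decomposition $R_i = S_i \oplus L_i$; since $\ell \in L_i \leq L$, this shows $R_i = S_i + (R_i \cap L)$, and combined with $S_i \cap (R_i \cap L) = 0$ we get $R_i = S_i \oplus (R_i \cap L)$. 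So $\Rr = \Ss \oplus \Tt$ in the filtered sense.

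I expect no serious obstacle: the argument is essentially the standard observation that any flag of subspaces admits a compatible basis, adapted to the pair of flags $(S_i)$ and $(R_i)$. The one subtle point is getting the compatibility between the ambient vector-space complement $L$ of $K$ in $R_n$ and the filtration-level complements $L_i$ of $S_i$ in $R_i$ right; the induction above handles this by extending $L_i$ directly rather than intersecting down from a pre-chosen $L$, which avoids the usual pitfall that an arbitrary vector-space complement of $K$ need not intersect each $R_i$ in a complement of $S_i$.
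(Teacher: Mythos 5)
Your proof is correct and follows essentially the same plan as the paper: build the complements $T_i$ (your $L_i$) inductively up the filtration, using the key observation that the already-chosen complement of $S_i$ in $R_i$ meets $S_{i+1}$ trivially (via $S_{i+1}\cap R_i = S_i$), and then extend to a complement of $S_{i+1}$ in $R_{i+1}$. Your extra paragraph verifying $T_i = R_i\cap L$ is a useful explicit check, but it supplies the same content the paper establishes just before the proposition (that summands of a direct-sum decomposition are automatically induced subspaces).
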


\begin{proof}
We are given $\Ss = \Rr \cap K$, and seek to construct $\Tt = (T_0, T_1, \dots, T_n)$ such that $\Rr = \Ss \oplus \Tt$.
We proceed inductively.
Since $R_0 = S_0 = 0$ we take $T_0 = 0$. Now suppose we have chosen $T_k$ so that $R_k = S_k \oplus T_k$. In particular, $T_k \cap S_k = 0$. Then
\[
T_k \cap S_{k+1} \leq T_k \cap S_n 
= (T_k \cap R_k) \cap S_n
= T_k \cap (R_k \cap S_n)
= T_k \cap S_k = 0.
\]
Thus $T_k$ and $S_{k+1}$ are independent subspaces of $R_{k+1}$, and so $T_k$ can be extended to a complement $T_{k+1}$ of $S_{k+1}$ in $R_{k+1}$.
This completes the induction.
\end{proof}

\begin{Corollary}
The indecomposables in~$\filt_n$ are precisely the intervals $\Jj(i,n)$.
Thus, every filtered vector space can be decomposed as a finite direct sum of intervals.
\end{Corollary}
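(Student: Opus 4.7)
The plan is to prove the two assertions separately, with indecomposability of the $\Jj(i,n)$ following quickly from the structure of direct-sum decompositions established just above, and the existence of an interval decomposition following by a straightforward induction on $\dim(R_n)$ using Proposition~\ref{prop:complement} as the essential splitting tool.

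For indecomposability of $\Jj(i,n)$, I would use the key observation already recorded in the excerpt: in any direct-sum decomposition $\Rr = \Ss \oplus \Tt$, both summands are \emph{induced} subspaces of~$\Rr$, that is, $\Ss = \Rr \cap S_n$ and $\Tt = \Rr \cap T_n$ with $R_n = S_n \oplus T_n$. Applied to $\Rr = \Jj(i,n)$, one has $J_n \cong \Ff$, so $S_n \oplus T_n \cong \Ff$ forces one of $S_n, T_n$ to be zero. Whichever is zero pulls back, via the induced-subspace description, to the zero filtered vector space, so one of $\Ss, \Tt$ vanishes. Hence $\Jj(i,n)$ admits no nontrivial decomposition.

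For the decomposition into intervals, I would proceed by induction on $\dim(R_n)$. The base case $\dim(R_n) = 0$ gives $\Rr = 0$, which is the empty direct sum. For the inductive step with $\dim(R_n) \geq 1$, pick any nonzero vector $v \in R_n$, let $K = \Span(v) \leq R_n$, and consider the induced subspace $\Ss = \Rr \cap K$. Since $\dim(S_n) = \dim(K) = 1$, the filtered vector space $\Ss$ has a single one-dimensional subquotient, occurring at the smallest index~$i$ for which $v \in R_i$; thus $\Ss \cong \Jj(i,n)$. By Proposition~\ref{prop:complement}, $\Ss$ has a complementary summand $\Tt$ with $\Rr = \Ss \oplus \Tt$. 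Since $\dim(T_n) = \dim(R_n) - 1$, the inductive hypothesis writes $\Tt$ as a direct sum of intervals, and appending $\Ss$ gives the desired interval decomposition of~$\Rr$.

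There is essentially no obstacle here: the heavy lifting was already done in Proposition~\ref{prop:complement}, and all that remains is to recognise that one-dimensional induced subspaces are precisely the intervals $\Jj(i,n)$. The only subtle point is the observation, reused in both halves, that any summand is automatically an induced subspace, which reduces the entire classification problem to the vector-space question of how to split $R_n$.
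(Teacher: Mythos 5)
Your proof is correct and takes essentially the same approach as the paper: the paper compresses it into a single line (``$\Rr$ has nontrivial summands if and only if $R_n$ has nontrivial vector subspaces,'' via Proposition~\ref{prop:complement} and the induced-subspace observation), whereas you unpack that reasoning into a two-part argument with an explicit induction on~$\dim(R_n)$. No difference in substance.
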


\begin{proof}
By Proposition~\ref{prop:complement}, $\Rr$ has nontrivial summands if and only if $R_n$ has nontrivial vector subspaces; this happens exactly when $\dim(R_n) > 1$.
\end{proof}

The dimension of $\Rr \in\filt_n$ is defined to be the vector of integers
\[
\dim(\Rr) = (c_1, c_2, \dots, c_n),
\]
where $c_i = \dim(R_i / R_{i-1})$ are the dimensions of the successive subquotients of the filtration.

\begin{Proposition}
\label{prop:filtinterval}
Let $\Rr$ be a filtered vector space of depth~$n$, with $\dim(\Rr) = (c_1, c_2, \dots, c_n)$. For any decomposition of $\Rr$ into intervals, the multiplicity of $\Jj(i,n)$ is~$c_i$. Thus:
\[
\Rr \cong \bigoplus_{1 \leq i \leq n} c_i\, \Jj(i,n).
\]
\end{Proposition}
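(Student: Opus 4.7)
The plan is to combine the existence of an interval decomposition (guaranteed by the preceding Corollary) with a dimension count, exploiting the fact that the dimension vector is additive under direct sum and that each interval $\Jj(i,n)$ has a particularly simple dimension vector.

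First I would observe that by definition $\Jj(i,n) = (J_0,\dots,J_n)$ with $J_k = 0$ for $k < i$ and $J_k = \Ff$ for $k \geq i$, so the subquotient $J_k/J_{k-1}$ has dimension $1$ when $k=i$ and $0$ otherwise. In other words, $\dim(\Jj(i,n))$ is the standard basis vector $e_i \in \mathbb{Z}^n$. Next I would note that $\dim$ is additive with respect to $\oplus$ in $\filt_n$: if $\Rr = \Ss \oplus \Tt$ then $R_k = S_k \oplus T_k$ for every $k$, so $R_k/R_{k-1} \cong S_k/S_{k-1} \oplus T_k/T_{k-1}$, and hence $\dim(\Rr) = \dim(\Ss) + \dim(\Tt)$ componentwise.

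Now let $\Rr \cong \bigoplus_{i=1}^{n} m_i\, \Jj(i,n)$ be any interval decomposition, which exists by the Corollary following Proposition~\ref{prop:complement}. Applying the two observations above,
\[
\dim(\Rr) \;=\; \sum_{i=1}^n m_i\, \dim(\Jj(i,n)) \;=\; \sum_{i=1}^n m_i\, e_i \;=\; (m_1, m_2, \dots, m_n).
\]
Comparing with $\dim(\Rr) = (c_1,\dots,c_n)$ gives $m_i = c_i$ for every $i$, which is the claimed multiplicity formula. Note that this argument also yields uniqueness of the multiplicities across all possible interval decompositions, since the $c_i$ are intrinsic invariants of~$\Rr$.

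I do not anticipate any serious obstacle: the only subtle point is confirming that the dimension vector is additive on the nose, which follows from the fact (already used to prove Proposition~\ref{prop:complement}) that a direct sum of subspaces of a filtered vector space gives a direct sum at each level of the filtration. Everything else is bookkeeping.
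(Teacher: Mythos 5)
Your proof is correct and is essentially the same dimension-counting argument as the paper's: the paper phrases it in terms of the partial sums $\dim(R_k)=m_1+\dots+m_k=c_1+\dots+c_k$, whereas you work directly with the subquotient dimension vector, which is just the difference form of the same equations.
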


\begin{proof}
Let $m_i$ be the multiplicity of $\Jj(i,n)$. Then, for all~$k$,
\[
\dim(R_k) = m_1 + m_2 + \dots + m_k
\]
by considering the contribution of each summand, whereas
\[
\dim(R_k) = c_1 + c_2 + \dots + c_k
\]
by considering the contribution of each subquotient $R_i/R_{i-1}$.
This is possible only if $m_i = c_i$ for all~$i$.
\end{proof}

This concludes our tour of the decomposition theory for filtered vector spaces. Now we must leverage this to achieve a decomposition theory for $\tau$-modules. In one direction, the relationship is straightforward:

\begin{Proposition}
\label{prop:directsum}
The right-filtration operation respects direct sums, in the sense that
\[
\rf(\Vv_1 \oplus \dots \oplus \Vv_N) = \rf(\Vv_1) \oplus \dots \oplus \rf(\Vv_N)
\]
for $\tau$-modules $\Vv_1, \dots, \Vv_N$.
\end{Proposition}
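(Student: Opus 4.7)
The plan is a direct induction on the length $n$ of $\tau$, mirroring the recursive definition of $\rf$ (Definition~\ref{def:rfilt}). Writing $\Vv = \Vv_1 \oplus \dots \oplus \Vv_N$ and denoting the levels of $\rf(\Vv_j)$ by $R_{j,0} \leq R_{j,1} \leq \dots \leq R_{j,n}$, what must be shown is the equality of filtered vector spaces $\rf(\Vv) = (R_0, R_1, \dots, R_n)$ with $R_i = \bigoplus_{j} R_{j,i}$ at each level. In the base case $n = 1$ this is immediate from $\rf(\Vv_j) = (0, V_{j,1})$, since $(0, \bigoplus_j V_{j,1}) = \rf(\bigoplus_j V_{j,1})$ by inspection.

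For the inductive step, assume the claim for types of length $n$ and extend each $\Vv_j$ to $\Vv_j^+ = \Vv_j \stackrel{p_{j,n}}{\lrmap} V_{j,n+1}$ using maps of a single common orientation. The connecting map of $\bigoplus_j \Vv_j^+$ is then the block-diagonal operator $p_n = \bigoplus_j p_{j,n}$. If the new step is of type $f$, Definition~\ref{def:rfilt} produces
\[
\rf\bigl(\textstyle\bigoplus_j \Vv_j^+\bigr) = \bigl(f_n(R_0), \dots, f_n(R_n), \textstyle\bigoplus_j V_{j,n+1}\bigr),
\]
and the required identity $f_n(R_i) = \bigoplus_j f_{j,n}(R_{j,i})$ is immediate from the block-diagonal form of $f_n$. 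If the step is of type $g$, the analogous identity to verify is $g_n^{-1}(R_i) = \bigoplus_j g_{j,n}^{-1}(R_{j,i})$; this holds because a vector $(w_1, \dots, w_N)$ satisfies $g_n(w_1, \dots, w_N) \in \bigoplus_j R_{j,i}$ precisely when each coordinate $g_{j,n}(w_j)$ lies in its own $R_{j,i}$. The additional zero subspace prepended by the $g$-case of Definition~\ref{def:rfilt} matches on both sides trivially.

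I do not anticipate any substantive obstacle here: the content is elementary linear algebra applied level-by-level. The one place that warrants a moment's thought is the preimage identity in case $g$, which succeeds precisely because the components of a block-diagonal operator act independently on coordinates, so membership in a coordinate-wise direct sum decouples. Closing the induction then yields the proposition.
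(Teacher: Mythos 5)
Your proof is correct and follows essentially the same route as the paper's: induction on the length of $\tau$ mirroring the recursive structure of Definition~\ref{def:rfilt}, together with the two linear-algebra identities for images and preimages under block-diagonal maps. The paper states those two identities without elaboration; your observation that the preimage identity holds because block-diagonal action decouples across coordinates is precisely the justification.
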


\begin{proof}
This is proved by induction on~$\tau$, following the recursive structure of Definition~\ref{def:rfilt} and using the standard facts
\[
(f_1 \oplus \dots \oplus f_N)(R_1 \oplus \dots \oplus R_N)
= f_1(R_1) \oplus \dots \oplus f_N(R_N)
\]
and
\[
(g_1 \oplus \dots \oplus g_N)^{-1}(R_1 \oplus \dots \oplus R_N)
= g_1^{-1}(R_1) \oplus \dots \oplus g_N^{-1}(R_N)
\]
from linear algebra. (For simplicity we are suppressing various indices here.)
\end{proof}

However, what we need is a converse to Proposition~\ref{prop:directsum}: if the filtered vector space $\Rr = \rf(\Vv)$ can be split as a direct sum $\Rr = \Rr_1 \oplus \dots \oplus \Rr_N$, we would like to infer a corresponding splitting $\Vv = \Vv_1 \oplus \dots \oplus \Vv_N$ of $\tau$-modules.
In the following two sections we establish such a principle for a particular class: the `streamlined' $\tau$-modules.

\subsection{Streamlined modules}
\label{subsec:streamlined}

We introduce a special class of $\tau$-module for which the right-filtration functor preserves all structural information.

\begin{Definition}
A $\tau$-module~$\Vv$ is \textbf{(right-)streamlined} if each $\stackrel{f_i}{\rmap}$ is injective and each $\stackrel{g_i}{\lmap}$ is surjective. 
\end{Definition}

Similarly, we may say that a $\tau$-module~$\Vv$ is left-streamlined if each $\stackrel{f_i}{\rmap}$ is surjective and each $\stackrel{g_i}{\lmap}$ is injective. We will not need to consider left-streamlined modules until Section~\ref{sec:further}. By default, `streamlined' will be taken to mean `right-streamlined'.

\begin{Example}
\label{ex:streaminterval}
Intervals $\Ii(b,n)$ are streamlined (but not $\Ii(b,d)$ for $d < n$). Conversely, a streamlined $\tau$-module $\Vv$ with $\dim(V_n) = 1$ is necessarily isomorphic to some $\Ii(b,n)$. Indeed, $\dim(V_i)$ is a non-decreasing sequence and therefore comprises some $b-1$ zeros (where $1 \leq b \leq n$) followed by $n-b+1$ ones. The maps between the one-dimensional terms are injective or surjective, and therefore isomorphisms.
\end{Example}

\begin{Proposition}
\label{prop:splitstream}
A direct sum $\Vv = \Vv_1 \oplus \dots \oplus \Vv_N$ of $\tau$-modules is streamlined if and only if each summand is streamlined.
\end{Proposition}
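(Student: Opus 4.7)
The plan is to reduce the statement to a single elementary fact from linear algebra and then apply it edge-by-edge. Specifically, for any finite family of linear maps $\phi_j : A_j \to B_j$, the direct sum $\phi = \bigoplus_j \phi_j : \bigoplus_j A_j \to \bigoplus_j B_j$ satisfies
\[
\Ker(\phi) = \bigoplus_j \Ker(\phi_j), \qquad \Img(\phi) = \bigoplus_j \Img(\phi_j),
\]
so $\phi$ is injective (respectively surjective) if and only if every $\phi_j$ is injective (respectively surjective).

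Next I would invoke the external description of the direct sum from the Remark in Section~\ref{subsec:taudecomp}: if $\Vv = \Vv_1 \oplus \dots \oplus \Vv_N$, then at each forward position the structure map is $f_i = f_i^{(1)} \oplus \dots \oplus f_i^{(N)}$, and at each backward position it is $g_i = g_i^{(1)} \oplus \dots \oplus g_i^{(N)}$. Thus the conditions ``$f_i$ is injective'' and ``$g_i$ is surjective'' translate respectively into the conjunctions ``every $f_i^{(j)}$ is injective'' and ``every $g_i^{(j)}$ is surjective'' by the fact above.

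Running the equivalence over all indices $i$ simultaneously then yields the biconditional: $\Vv$ is streamlined if and only if each forward $f_i$ is injective and each backward $g_i$ is surjective, if and only if each summand $\Vv_j$ has all its forward maps injective and all its backward maps surjective, i.e.\ each $\Vv_j$ is streamlined. There is no substantive obstacle here; the only thing to take care of is bookkeeping — making sure one applies the linear-algebra lemma once per edge and pairs up the appropriate direction (injectivity for $f$-edges, surjectivity for $g$-edges).
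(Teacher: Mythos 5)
Your proof is correct and follows essentially the same route as the paper's: decompose each structure map as a direct sum of the summands' corresponding maps, and use the fact that injectivity/surjectivity of a direct sum of linear maps is equivalent to injectivity/surjectivity of each term. You simply state the linear-algebra lemma more explicitly, but the argument is the same.
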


\begin{proof}
Each $\fmap{}$ in~$\Vv$ decomposes as $f = f_1 \oplus \dots \oplus f_N$ and is injective if and only if each $f_j$ is injective. Each $\gmap{}$ in~$\Vv$ decomposes as $g = g_1 \oplus \dots \oplus g_N$ and is surjective if and only if each $g_j$ is surjective.
\end{proof}

The proof of the following lemma appears at the end of this section.

\begin{Lemma}[Decomposition Lemma]
\label{lem:decomp}
Let $\Vv$ be a streamlined $\tau$-module and let $\Rr = \rf(\Vv)$. For any decomposition $\Rr = \Ss_1 \oplus \dots \oplus \Ss_N$, there exists a unique decomposition $\Vv = \Ww_1 \oplus \dots \oplus \Ww_N$ such that $\Ss_i = \rf(\Ww_j)$ for all~$j$.
\end{Lemma}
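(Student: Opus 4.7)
The plan is to induct on the length $n$ of $\tau$, mirroring the recursive structure of $\rf$. The base case $n=1$ is immediate: $\Vv = V_1$ and $\rf(\Vv) = (0, V_1)$, so any vector-space decomposition of $V_1$ serves directly as a module decomposition. For the inductive step, write $\Vv^+$ as $\Vv$ with an appended map, either $V_n \fmap{n} V_{n+1}$ (with $f_n$ injective) or $V_n \gmap{n} V_{n+1}$ (with $g_n$ surjective), and suppose given a decomposition $\Rr^+ = \rf(\Vv^+) = \Ss_1^+ \oplus \dots \oplus \Ss_N^+$ with $V_{n+1} = K_1^+ \oplus \dots \oplus K_N^+$ and $\Ss_j^+ = \Rr^+ \cap K_j^+$. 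The aim is to descend this decomposition across the appended map to produce a splitting of $\Rr = \rf(\Vv)$, apply the induction hypothesis to split $\Vv$, and then re-ascend by attaching each $K_j^+$ together with the appropriate restriction of $f_n$ or $g_n$.

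In the forward case I would set $L_j = f_n^{-1}(K_j^+) \subseteq V_n$. Injectivity of $f_n$, combined with the identity $f_n(V_n) = R_n^+ = \bigoplus_j (R_n^+ \cap K_j^+)$, gives $V_n = \bigoplus_j L_j$, and more generally $R_i = \bigoplus_j (R_i \cap L_j)$ for each $i$ (since $R_i = f_n^{-1}(R_i^+)$), producing the splitting $\Rr = \bigoplus_j \Ss_j$ with $\Ss_j = \Rr \cap L_j$. In the backward case I would set $L_j = g_n(K_j^+)$; surjectivity of $g_n$ gives $V_n = L_1 + \dots + L_N$, and the critical point is independence: if $v = g_n(k_j) = g_n(k_{j'})$ with $j \neq j'$, then $k_j - k_{j'} \in \Ker(g_n) = R_1^+$, and the hypothesis $R_1^+ = \bigoplus_l (R_1^+ \cap K_l^+)$ forces $k_j, k_{j'} \in \Ker(g_n)$, hence $v = 0$. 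A parallel computation, using $R_i = g_n(R_{i+1}^+)$, shows $R_i = \bigoplus_j g_n(R_{i+1}^+ \cap K_j^+)$ for all $i$, yielding the required $\Rr = \bigoplus_j \Ss_j$.

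Once $\Rr$ is split, the induction hypothesis produces a unique decomposition $\Vv = \Ww_1 \oplus \dots \oplus \Ww_N$ with $\rf(\Ww_j) = \Ss_j$, so in particular $W_{j,n} = L_j$. I then define $\Ww_j^+$ by appending $W_{j,n+1} = K_j^+$ together with the restricted map $f_n|_{L_j}$ or $g_n|_{K_j^+}$; each restriction remains injective (respectively surjective), so $\Ww_j^+$ is streamlined, and both $\Vv^+ = \bigoplus_j \Ww_j^+$ and $\rf(\Ww_j^+) = \Ss_j^+$ follow directly from the recursive definition of $\rf$ and the linear-algebra identities used in Proposition~\ref{prop:directsum}. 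For uniqueness, given any competing decomposition $\Vv^+ = \Xx_1 \oplus \dots \oplus \Xx_N$ with $\rf(\Xx_j) = \Ss_j^+$, the top space $X_{j,n+1}$ must equal $S_{j,n+1}^+ = K_j^+$; the submodule condition then forces $f_n(X_{j,n}) \subseteq K_j^+$ (respectively $g_n(K_j^+) \subseteq X_{j,n}$), and together with $V_n = \bigoplus_j X_{j,n}$ this pins down $X_{j,n} = L_j$, after which inductive uniqueness finishes the job. The main obstacle is the backward case --- specifically, the argument that the images $L_j = g_n(K_j^+)$ are independent in $V_n$ --- since here the specific structure $R_1^+ = \Ker(g_n)$ must feed back into the filtration decomposition in a non-trivial way to control the collapse under $g_n$.
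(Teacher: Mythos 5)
Your approach is correct and essentially the same as the paper's: induction on the length of $\tau$, a case split on whether the final map is $f_n$ (injective) or $g_n$ (surjective), descending the given filtration decomposition of $\rf(\Vv^+)$ across that map to a decomposition of $\rf(\Vv)$, invoking the induction hypothesis to split~$\Vv$, and then re-ascending; the paper additionally reduces to $N=2$ at the outset, which is a bookkeeping difference only. One small slip in the backward case: as written, your independence argument only shows $L_j \cap L_{j'} = 0$ pairwise, which does not imply $V_n = L_1 \oplus \dots \oplus L_N$ when $N \geq 3$. The fix is immediate and uses exactly your idea: given a relation $\sum_j g_n(k_j) = 0$ with $k_j \in K_j^+$, one has $\sum_j k_j \in \Ker(g_n) = R_1^+ = \bigoplus_l (R_1^+ \cap K_l^+)$, and uniqueness of components in $V_{n+1} = \bigoplus_l K_l^+$ forces each $k_j \in R_1^+ \cap K_j^+ \subseteq \Ker(g_n)$, hence $g_n(k_j)=0$ for every~$j$. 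Alternatively, reduce to $N=2$ first, where the pairwise statement is the whole independence statement.
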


\begin{Theorem}[Interval decomposition for streamlined modules]
\label{thm:streamlinedremak}
Let $\Vv$ be a streamlined $\tau$-module of length~$n$, and write
$\dim(\rf(\Vv)) = (c_1, c_2, \dots, c_n)$ and $\bt(\tau) = (b_1, b_2, \dots, b_n)$.
Then there is an isomorphism of $\tau$-modules
\[
\Vv \cong \bigoplus_{1 \leq i \leq n} c_i \, \Ii(b_i, n).
\]
\end{Theorem}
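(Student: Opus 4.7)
My plan is to combine the decomposition theory for filtered vector spaces (Proposition \ref{prop:filtinterval}) with the Decomposition Lemma (Lemma \ref{lem:decomp}) to lift a filtered decomposition of $\rf(\Vv)$ to a decomposition of~$\Vv$, and then to identify each summand via the one-dimensional characterisation of streamlined intervals in Example~\ref{ex:streaminterval}.

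First I would apply Proposition~\ref{prop:filtinterval} to the filtered vector space $\Rr = \rf(\Vv)$, obtaining an explicit decomposition
\[
\rf(\Vv) \cong \bigoplus_{1 \leq i \leq n} c_i\, \Jj(i,n),
\]
that is, a splitting of $\rf(\Vv)$ into exactly $c_1 + \dots + c_n$ interval summands, with $c_i$ copies of $\Jj(i,n)$ for each~$i$. By Lemma~\ref{lem:decomp}, since $\Vv$ is streamlined and $\rf(\Vv)$ decomposes this way, there is a (unique) corresponding decomposition
\[
\Vv = \bigoplus_{1 \leq i \leq n} \bigoplus_{k=1}^{c_i} \Ww_{i,k}
\]
of $\tau$-modules, with $\rf(\Ww_{i,k}) \cong \Jj(i,n)$ for each $i,k$.

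Next I would identify each summand $\Ww_{i,k}$ up to isomorphism. By Proposition~\ref{prop:splitstream}, each $\Ww_{i,k}$ is itself streamlined. Since $\rf(\Ww_{i,k}) \cong \Jj(i,n)$ has top space of dimension~1, the rightmost vector space of $\Ww_{i,k}$ is one-dimensional, so Example~\ref{ex:streaminterval} identifies $\Ww_{i,k}$ as an interval module $\Ii(b,n)$ for some $b$. To pin down $b$, I invoke Lemma~\ref{lem:intervalformula}, which asserts $\rf(\Ii(b_i, n)) = \Jj(i,n)$ and, because the $\Jj(j,n)$ are pairwise non-isomorphic, forces $b = b_i$. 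Hence $\Ww_{i,k} \cong \Ii(b_i, n)$ for every $k$, giving
\[
\Vv \cong \bigoplus_{1 \leq i \leq n} c_i\, \Ii(b_i, n)
\]
as required.

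I do not expect any genuine obstacle here: all the heavy lifting has been done in the preceding lemmas. The one point requiring care is the justification that $\rf(\Ww_{i,k}) \cong \Jj(i,n)$ uniquely determines $b$, which relies on the injectivity-on-isomorphism-classes statement implicit in Lemma~\ref{lem:intervalformula} (different $b_i$ yield different $\Jj(i,n)$, since the birth-time indices $b_1, \dots, b_n$ are a permutation of $1, \dots, n$). Once this is noted, the theorem follows by assembling the three ingredients above.
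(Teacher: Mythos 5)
Your proposal is correct and follows essentially the same path as the paper's own proof: decompose $\rf(\Vv)$ via Proposition~\ref{prop:filtinterval}, lift via the Decomposition Lemma~\ref{lem:decomp}, observe each summand is streamlined with one-dimensional top space (Propositions~\ref{prop:splitstream} and Example~\ref{ex:streaminterval}), and pin down the birth time using Lemma~\ref{lem:intervalformula}. The one place you are slightly more careful than the paper --- noting explicitly that the $\Jj(i,n)$ are pairwise non-isomorphic so the identification $b = b_i$ is forced --- is a genuine but minor refinement, not a different argument.
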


\begin{proof}
Let $\Rr = \rf(\Vv)$. By Proposition~\ref{prop:filtinterval}, there is a decomposition $\Rr = \Jj_1 \oplus \dots \oplus \Jj_N$, where the $\Jj_j$ are a collection of $N = c_1 + \dots + c_n$ intervals with $\Jj(i,n)$ occuring with multiplicity~$c_i$. Lemma~\ref{lem:decomp} produces a decomposition $\Vv = \Ii_1 \oplus \dots \oplus \Ii_N$, with $\rf(\Ii_j) = \Jj_j$ for all~$j$. Each $\Ii_j$ is streamlined (Proposition~\ref{prop:splitstream}) with maximum dimension $\dim((I_j)_n) = 1$, and is therefore isomorphic to some $\Ii(b,n)$ (Example~\ref{ex:streaminterval}). By Lemma~\ref{lem:intervalformula}, we must have $\Ii_j = \Ii(b_i,n)$ whenever $\Jj_j = \Jj(i,n)$. It follows that the $\Ii_j$ are a collection of $N = c_1 + \dots + c_n$ intervals with $\Ii(b_i,n)$ occuring with multiplicity~$c_i$.
\end{proof}

We complete this chapter with a proof of the Decomposition Lemma.

\begin{proof}[Proof of Lemma~\ref{lem:decomp}]
We may assume that $N=2$, since the general case follows by iteration. Accordingly, suppose that $\Rr = \rf(\Vv)$ can be written in the form $\Rr = \Ss \oplus \Tt$; we must show that there is a corresponding decomposition $\Vv =\Ww \oplus \Xx$. We will argue by induction on $n = \operatorname{\rm length}(\tau)$.

The first step is to determine the splitting $V_n = W_n \oplus X_n$. In fact, the stipulation that $\Ss = \rf(\Ww)$ and $\Tt = \rf(\Xx)$ forces $W_n = S_n$ and $X_n = T_n$. If $n=1$, then we are done. 
Otherwise, let $\hat\Vv$ denote the truncation of $\Vv$ to the indices $\{1, \dots, n-1\}$ and let $\hat\Rr = \rf(\hat\Vv)$. We will shortly establish that $\Rr = \Ss \oplus \Tt$ induces a unique compatible decomposition $\hat\Rr = \hat\Ss \oplus \hat\Tt$. The inductive hypothesis will then provide $\hat\Vv = \hat\Ww \oplus \hat\Xx$, which combines with $V_n = W_n \oplus X_n$ to produce the desired decomposition $\Vv = \Ww \oplus \Xx$. That will complete the proof.

Write $\Rr = (R_0, R_1, \dots, R_n)$. There are two cases.

Case~$\fmap{n-1}$, injective. We can identify $V_{n-1}$ with the subspace $f_{n-1}(V_{n-1}) = R_{n-1}$ of $V_n$. Thereupon we have
\[
\hat\Rr = (R_0, R_1, \dots, R_{n-1}).
\]
The unique splitting of $V_{n-1}$ compatible with $V_n = W_n \oplus X_n$ is
\[
V_{n-1} = (R_{n-1} \cap W_n) \oplus (R_{n-1} \cap X_n) = S_{n-1} \oplus T_{n-1}.
\]
We must now verify that the induced subspaces $\hat\Ss = \hat\Rr \cap S_{n-1}$ and $\hat\Tt = \hat\Rr \cap T_{n-1}$ give a valid decomposition $\hat\Rr = \hat\Ss \oplus \hat\Tt$ of filtered vector spaces. This follows because $\hat S_{i} = R_i \cap S_{n-1} = R_i \cap S_n = S_i$ and similarly $\hat T_{i} = T_i$, for all~$i < n$; so $R_i = S_i \oplus T_i = \hat S_i \oplus \hat T_i$ as required.

Case~$\gmap{n_1}$, surjective. Here we identify $V_{n-1}$ as the quotient $V_n / \ker(g_{n-1}) = R_n / R_1$. Under this identification,
\[
\hat\Rr = (R_1/R_1, R_2/R_1, \dots, R_n/R_1).
\]
In splitting $V_{n-1} = W_{n-1} \oplus X_{n-1}$ we are compelled to take
\[
W_{n-1} = g_{n-1}(W_n) = S_n / S_1,
\qquad
X_{n-1} = g_{n-1}(X_n) = T_n / T_1,
\]
which induce
\[
\hat{S}_{i} = g_{n-1}(S_{i+1}) = S_{i+1} / S_1,
\qquad
\hat{T}_{i} = g_{n-1}(T_{i+1}) = T_{i+1}/ T_1,
\]
for the purported splitting $\hat\Rr = \hat\Ss \oplus \hat\Tt$.
To confirm that this is a genuine decomposition, we note from linear algebra that the twin facts
\[
R_{i+1} = S_{i+1} \oplus T_{i+1},
\qquad
R_1 = S_1 \oplus T_1 = (S_{i+1} \cap R_1) \oplus (T_{i+1} \cap R_1)
\]
imply that
\[
R_{i+1}/R_1 = (S_{i+1}/S_1) \oplus (T_{i+1}/T_1)  
\]
as required.
\end{proof}

\begin{Remark}
There is a high-level proof of Lemma~\ref{lem:decomp} which in some sense is the natural explanation for the result. We outline this proof now. The first observation is that the transformation $\Vv \to \rf(\Vv)$ is a functor from $\taumod$ to $\filt_n$: a morphism $\alpha: \Vv \to \Ww$ induces a morphism $\rf(\alpha): \rf(\Vv) \to \rf(\Ww)$. Indeed, $\rf(\alpha)$ is defined to be $\alpha_n: V_n \to W_n$; one must check that this respects the filtrations on $V_n$ and $W_n$.
Being a functor, $\rf$ defines a ring homomorphism $\End(\Vv) \to \End(\rf(\Vv))$. The second key fact is that this homomorphism is an isomorphism if $\Vv$ is streamlined (in general it is surjective). It is well known that direct-sum decompositions of a module can be extracted from the structure of its endomorphism ring: direct summands correspond to idempotent elements of the ring. It follows that $\Vv$ and $\rf(\Vv)$ have the same decomposition structure.
\end{Remark}

\section{The Interval Decomposition Algorithm}
\label{sec:algorithms}

Here we describe the algorithm for determining the indecomposable factors of a $\tau$-module. We give three versions of the `algorithm'.

The first version, in Section~\ref{subsec:maintheorem}, is not an algorithm but a proof that every $\tau$-module decomposes as a sum of interval modules (Theorem~\ref{thm:interval}). Moreover, the structure of the proof makes it clear how to compute the interval decomposition (Theorem~\ref{thm:main}). The algorithms in the subsequent sections build on this.

In Section~\ref{subsec:algabst} we describe an abstract form of the decomposition algorithm, using the language of vector spaces and linear maps. No consideration is given to how the spaces and maps are described and manipulated in practice.

In Section~\ref{subsec:algconc} we suppose that the maps $f_i, g_i$ are presented concretely as matrices $M_i, N_i$ with respect to a choice of bases for the vector spaces~$V_i$. We describe an algorithm which takes these matrices as input and returns the interval decomposition.

\subsection{The interval decomposition theorem}
\label{subsec:maintheorem}

Our present goal is to give a somewhat constructive proof of Theorem~\ref{thm:interval}, which asserts that any $\tau$-module~$\Vv$ is isomorphic to a direct sum of intervals $\Ii(b,d)$. We prove a stronger, more precise result, which explicitly determines the multiplicity of each interval within~$\Vv$.

Some notation will help with the theorem statement. If
\[
\Vv =  (V_1 \pmap{1} \dots \pmap{n-1} V_n)
\]
then let
\[
\Vv[k] =  (V_1 \pmap{1} \dots \pmap{k-1} V_k)
\]
denote the truncation of $\Vv$ to length~$k$, and let $\tau[k]$ denote its type (which is a truncation of~$\tau$).

\begin{Theorem}[Interval decomposition]
\label{thm:main}
Let $\Vv$ be a $\tau$-module. For $1 \leq k \leq n$, define
\[
(b_1^k, b_2^k, \dots, b_k^k) = \bt(\tau[k]).
\]
Writing $\Rr_k = \rf(\Vv[k])$, define
\[
(c_1^k, c_2^k, \dots, c_k^k) = 
\left\{
\begin{array}{l}
\dim(\Rr_k \cap \Ker(f_k) )
\\
\dim(\Rr_k) - \dim(\Rr_k \cap \Img(g_k))
\end{array}
\right.
\]
(whichever is applicable) when $k \ne n$, and
\[
(c_1^n, c_2^n, \dots, c_n^n) = \dim(\Rr_n).
\] 
Then
\[
\Vv \cong \bigoplus_{1 \leq i \leq k \leq n} c_i^k\, \Ii(b_i^k, k).
\]
\end{Theorem}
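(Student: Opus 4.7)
The plan is to prove Theorem~\ref{thm:main} by streamlining $\Vv$ iteratively from left to right: at each step $k$ with $1 \leq k < n$ we peel off a direct summand $\Vv^d_k$ consisting of exactly the intervals that die at time $k$, leaving a residue in which the map $p_k$ has been streamlined. After $n-1$ such steps the residue is a streamlined $\tau$-module of length $n$, and Theorem~\ref{thm:streamlinedremak} delivers the remaining intervals (those with death time~$n$).

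In detail, I would construct recursively $\Vv^{(0)} = \Vv, \Vv^{(1)}, \dots, \Vv^{(n-1)}$ with $\Vv^{(k-1)} = \Vv^d_k \oplus \Vv^{(k)}$ such that all of $p_1, \dots, p_k$ are streamlined in $\Vv^{(k)}$. The key preliminary observation is that at step $k$ the truncation $\Vv^{(k-1)}[k]$ is streamlined and satisfies $\rf(\Vv^{(k-1)}[k]) = \Rr_k$: each previously extracted summand $\Vv^d_j$ ($j < k$) is a sum of intervals $\Ii(b,j)$, all zero at positions $\geq j+1$, and hence contributes trivially to the right-filtration on $V_k$ by Proposition~\ref{prop:directsum}. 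For the same reason, $\Ker f_k$ and $\Img g_k$ computed in $\Vv^{(k-1)}$ agree with those in $\Vv$.

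Given these facts, the splitting at step $k$ proceeds by applying Proposition~\ref{prop:complement} to write $\Rr_k = (\Rr_k \cap \Ker f_k) \oplus \Rr_k'$ when $p_k = f_k$, or $\Rr_k = (\Rr_k \cap \Img g_k) \oplus \Rr_k'$ when $p_k = g_k$. The Decomposition Lemma (Lemma~\ref{lem:decomp}) lifts this filtered-space decomposition to a $\tau[k]$-module decomposition $\Vv^{(k-1)}[k] = \Uu \oplus \Uu'$. The ``dying'' summand (namely $\Rr_k \cap \Ker f_k$ in the forward case, $\Rr_k'$ in the backward case) extends by zero beyond position $k$ to form $\Vv^d_k$, while the other summand reattaches the unchanged tail $V_{k+1}, \dots, V_n$ to form $\Vv^{(k)}$, in which $p_k$ is now streamlined. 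By Theorem~\ref{thm:streamlinedremak} applied to the streamlined length-$k$ module $\Vv^d_k$, we obtain $\Vv^d_k \cong \bigoplus_i c_i^k \Ii(b_i^k, k)$ with exactly the multiplicities in the theorem statement; a final application of the same theorem to $\Vv^{(n-1)}$ yields $\bigoplus_i c_i^n \Ii(b_i^n, n)$, and assembling the summands completes the decomposition.

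The step I expect to require the most care is checking that the extensions of $\Uu$ and $\Uu'$ from length $k$ to length $n$ are genuine submodules of $\Vv^{(k-1)}$ whose direct sum recovers the whole of $\Vv^{(k-1)}$. This reduces to verifying the submodule condition at the single map $p_k$ in each of the four subcases (forward or backward, crossed with dying or surviving), and follows from the way the splittings of $V_k$ were chosen: when $p_k = f_k$, the dying part lies in $\Ker f_k$ and so maps to zero in $V_{k+1}$; when $p_k = g_k$, the surviving part contains $\Img g_k$ and so receives all of $g_k(V_{k+1})$, while the dying part receives nothing from $V_{k+1}$.
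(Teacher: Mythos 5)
Your proposal is correct and follows essentially the same route as the paper: both proceed by iteratively peeling off, at each step $k$, the direct summand of intervals dying at time $k$ via a splitting of the right-filtration using Proposition~\ref{prop:complement}, lifted to a module splitting via Lemma~\ref{lem:decomp}, and then apply Theorem~\ref{thm:streamlinedremak} to identify the intervals within each streamlined piece. The only cosmetic difference is that the paper organizes this into a standalone lemma (Lemma~\ref{lem:cascade}) that accumulates the summands $\Vv^1, \dots, \Vv^n$ over truncations $\Vv[k]$, while you carry forward a full-length residue $\Vv^{(k)}$ at each step; the underlying decomposition is the same.
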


\begin{Addendum}
\label{add:main}
In the situation of Theorem~\ref{thm:main}, write
\[
(r^k_1, \dots, r^k_k) = \dim(\Rr_k)
\]
for $k = 1,\dots, n$, and conventionally define $r^{n+1}_i = 0$ for all~$i$. Then
\[
c^k_i = \left\{
\begin{array}{ll}
r^k_i - r^{k+1}_i &\qquad \mbox{\rm case~$\fmap{k}$} \\
r^k_i - r^{k+1}_{i+1} &\qquad \mbox{\rm case~$\gmap{k}$}
\end{array}
\right.
\]
for $1 \leq i \leq k \leq n$.
\end{Addendum}

The decomposition strategy begins with the following lemma. The idea is to proceed from left to right along the complex, removing streamlined summands at each step. Having done this, the Remak decompositions of those summands can be determined by counting dimensions, as prescribed in Theorem~\ref{thm:streamlinedremak}.

\begin{Lemma}
\label{lem:cascade}
Let $\Vv = V_1 \pmap{1} \dots \pmap{n-1} V_n $ be an irreducible $\tau$-module of length~$n$. Then there exists a direct-sum decomposition
\[
\Vv = \Vv^1 \oplus \Vv^2 \oplus \dots \oplus \Vv^n
\]
where each $\Vv^k$ is supported over the indices $\{1, 2, \dots, k\}$
and is right-streamlined over that range.
\end{Lemma}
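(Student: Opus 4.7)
The plan is to argue by induction on $n$, the length of~$\Vv$. The base case $n=1$ is immediate: take $\Vv^1 = \Vv$, vacuously streamlined and supported on~$\{1\}$.

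For the inductive step, apply the hypothesis to the truncation $\Vv[n-1]$ to obtain
\[
\Vv[n-1] = \tilde\Vv^1 \oplus \dots \oplus \tilde\Vv^{n-1},
\]
with each $\tilde\Vv^k$ supported on $\{1,\dots,k\}$ and streamlined there. For $k \leq n-2$ the summand $\tilde\Vv^k$ vanishes at position~$n-1$, so it extends by zero at position~$n$ to a summand of~$\Vv$; I set $\Vv^k = \tilde\Vv^k$ in these cases. Since all other summands vanish at position~$n-1$, we have $\tilde V^{n-1}_{n-1} = V_{n-1}$, and the remaining piece of~$\Vv$ to analyse is
\[
\bar\Vv = (\tilde\Vv^{n-1} \pmap{n-1} V_n).
\]
The crux is to split $\bar\Vv = \Vv^{n-1} \oplus \Vv^n$, where $\Vv^{n-1}$ is supported on $\{1,\dots,n-1\}$ and $\Vv^n$ is streamlined over $\{1,\dots,n\}$ with $V^n_n = V_n$.

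The key tools are Proposition~\ref{prop:complement} and the Decomposition Lemma (Lemma~\ref{lem:decomp}): since $\tilde\Vv^{n-1}$ is streamlined, each direct-sum decomposition of its right-filtration $\Rr = \rf(\tilde\Vv^{n-1})$ lifts uniquely to a decomposition of the $\tau[n-1]$-module. In the forward case $\tilde\Vv^{n-1} \fmap{n-1} V_n$, the induced filtered subspace $\Rr \cap \Ker(f_{n-1})$ has a complementary summand $\Rr \cap U$ by Proposition~\ref{prop:complement} (for some vector-space complement $U$ of $\Ker(f_{n-1})$ in~$V_{n-1}$). This lifts to $\tilde\Vv^{n-1} = \Vv_W \oplus \Vv_U$ with $(V_W)_{n-1} = \Ker(f_{n-1})$ and $(V_U)_{n-1} = U$. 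I declare $\Vv^{n-1}$ to be $\Vv_W$ extended by zero at position~$n$, and $\Vv^n$ to be $\Vv_U$ with $V_n$ appended via $f_{n-1}|_U$; the appended map is injective because $U \cap \Ker(f_{n-1}) = 0$, and streamlined-ness over $\{1,\dots,n-1\}$ is inherited from $\tilde\Vv^{n-1}$ via Proposition~\ref{prop:splitstream}. In the backward case $\tilde\Vv^{n-1} \gmap{n-1} V_n$, apply the same machinery to $\Rr \cap \Img(g_{n-1})$ instead, producing $\tilde\Vv^{n-1} = \Vv_I \oplus \Vv_{I'}$; set $\Vv^{n-1} = \Vv_{I'}$ (extended by zero) and $\Vv^n = \Vv_I$ with $V_n$ prepended via $g_{n-1}$, which surjects onto $V^n_{n-1} = \Img(g_{n-1})$ by construction.

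Verifying $\bar\Vv = \Vv^{n-1} \oplus \Vv^n$ as $\tau$-modules then reduces to checking compatibility of the map at positions $(n-1,n)$, and this is immediate from the choices of $W, U, I, I'$: in the forward case $f_{n-1}$ sends $\Ker(f_{n-1})$ to~$0$ and~$U$ injectively into~$V_n$, and in the backward case $g_{n-1}$ sends $V_n$ onto $\Img(g_{n-1})$ and~$0$ to~$0$. The main obstacle is that the splitting of $V_{n-1}$ must respect the entire $\tau[n-1]$-module structure of $\tilde\Vv^{n-1}$, not merely its underlying vector-space structure; a naive vector-space complement of $\Ker(f_{n-1})$ (or of $\Img(g_{n-1})$) need not be a $\tau$-module summand. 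This is exactly what Proposition~\ref{prop:complement} secures in the category of filtered vector spaces, and the streamlined hypothesis---via Lemma~\ref{lem:decomp}---is what lets us transport the filtered decomposition back to the zigzag level.
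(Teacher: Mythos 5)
Your proof is correct and takes essentially the same route as the paper: both proceed by induction on length, use Proposition~\ref{prop:complement} to split the right-filtration at $\Ker(f_{n-1})$ or $\Img(g_{n-1})$, and invoke Lemma~\ref{lem:decomp} together with Proposition~\ref{prop:splitstream} to lift that splitting back to the zigzag level. The only cosmetic difference is that the paper maintains a ``rolling'' streamlined remainder $\Ww^k$ while you invoke the lemma directly on the truncation $\Vv[n-1]$ and refine its top summand $\tilde\Vv^{n-1}$; since $\tilde\Vv^{n-1}$ plays exactly the role of $\Ww^{n-1}$, the two inductions coincide.
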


The following picture illustrates the decomposition.
\[
\Vv = 
\left\{
\begin{array}{ccccccccccc}
\Vv^1 &=& V^1_1 
\\\oplus\\
\Vv^2 &=& V^2_1 & \pmap{1} & V^2_2
\\\oplus\\
\Vv^3 &=& V^3_1 & \pmap{1} & V^3_2 &\pmap{2} & V^3_3
\\\oplus\\
\vdots
\\\oplus\\
\Vv^n &=& V^n_1 & \pmap{1} & V^n_2 &\pmap{2} & V^n_3
& \pmap{3} & \cdots & \pmap{n-1} & V^n_n
\end{array}
\right.
\]
Each row (i.e.\ summand) is right-streamlined, and therefore amenable to analysis via the right-filtration functor.

\begin{proof}
We proceed by induction on the length of~$\Vv$.
%
The inductive statement is that
\[
\Vv[k] = \Vv^1 \oplus \dots \oplus \Vv^{k-1} \oplus \Ww^k
\]
where the $\Vv^i$ are as in the theorem statement, and $\Ww^k$ is itself right-streamlined.

For the base case $k=1$, there is nothing to prove: take $\Ww^1 = \Vv[1]$. Now suppose the inductive statement is established for~$k$, and consider $\Vv[k+1]$. This can be written
\begin{eqnarray*}
V[k+1] &=& (\Vv^1 \oplus \dots \oplus \Vv^{k-1} \oplus \Ww^k) \pmap{k} V_{k+1}
\\
&=& \Vv^1 \oplus \dots \oplus \Vv^{k-1} \oplus (\Ww^k \pmap{k} V_{k+1})
\end{eqnarray*}
where the rebracketing is permissible because all of the $\Vv^i$ terms terminate before time~$k$, and therefore do not interact with $\pmap{k}$.
The goal now is to rewrite $(\Ww^k \pmap{k} V_{k+1})$ as $\Vv^k \oplus \Ww^{k+1}$, where $\Vv^k$ terminates at time~$k$ and both $\Vv^k$ and $\Ww^{k+1}$ are right-streamlined. The rightmost term of $\Ww^k$ is $V_k$, so $\rf(\Ww^k)$ is a filtration on~$V_k$.

Case~$f$: $\Ww^k \fmap{k} V_{k+1}$. In other words $f_k: V_k \to V_{k+1}$. Let $\Ss = \rf(\Ww^k) \cap \Ker(f_k)$. Proposition~\ref{prop:complement} implies that $\Ss$ has a complement in~$\rf(\Ww^k)$; say $\rf(\Ww^k) = \Ss \oplus \Tt$. This corresponds (Lemma~\ref{lem:decomp}) to a direct sum decomposition $\Ww^k = \Vv^k \oplus \hat\Ww^{k}$, where both summands are streamlined (Proposition~\ref{prop:splitstream}). This defines $\Vv^k$, and we set $\Ww^{k+1} = (\hat\Ww^{k} \fmap{k} V_{k+1})$.
To check that this works, note that $f_k$ is zero on $(\Vv^k)_k = \Ker(f_k)$ and is injective on the complementary subspace $(\hat\Ww^k)_k$. Thus $\Vv^k$ is a summand of $\Vv[k+1]$ terminating at time~$k$, and $\Ww^{k+1}$ is streamlined.

Case~$g$: $\Ww^k \gmap{k} V_{k+1}$. In other words $g_k: V_{k+1} \to V_{k}$. Let $\Ss = \rf(\Ww^k) \cap \Img(g_k)$. Proposition~\ref{prop:complement} implies that $\Ss$ has a complement in~$\rf(\Ww^k)$; say $\rf(\Ww^k) = \Ss \oplus \Tt$. This corresponds (Lemma~\ref{lem:decomp}) to a direct sum decomposition $\Ww^k = \hat\Ww^{k} \oplus \Vv^k$, where both summands are streamlined (Proposition~\ref{prop:splitstream}). This defines $\Vv^k$, and we set $\Ww^{k+1} = (\hat\Ww^{k} \gmap{k} V_{k+1})$.
To check that this works, note that $g_k$ is surjective onto $(\hat\Ww^k)_k = \Img(g_k)$ and misses the complementary subspace $(\Vv^k)_k$. Thus $\Vv^k$ is a summand of $\Vv[k+1]$ terminating at time~$k$, and $\Ww^{k+1}$ is streamlined.

This establishes the inductive step, so eventually
\[
\Vv = \Vv[n] = \Vv^1 \oplus \dots \oplus \Vv^{n-1} \oplus \Ww^n
\]
and we set $\Vv^n = \Ww^n$ to finish the proof.
\end{proof}

\begin{proof}[Proof of Theorem~\ref{thm:main}]
Write $\Vv = \Vv^{1} \oplus \dots \oplus \Vv^n$ according to Lemma~\ref{lem:cascade}. We now calculate the decomposition of each $\Vv^k$ into intervals $\Ii(b,k)$. Note that
\[
\Vv[k] = \Vv^k \oplus \Vv^{k+1}[k] \oplus \dots \oplus \Vv^n[k].
\]
We can write $\Ww^k = \Vv^{k+1} \oplus \dots \oplus \Vv^n$, so then
\[
\Rr_k = \rf(\Vv^k \oplus \Ww^k[k]) = \rf(\Vv^k) \oplus \rf(\Ww^k[k])
\]
(using Proposition~\ref{prop:directsum}). This is a filtration on $V^k_k \oplus W^k_k$.

Suppose $k < n$. We note that $\Ww^k$ is streamlined up to time $k+1$, whereas $\Vv^k$ is zero at time $k+1$.
The next map in the sequence is
\[
V^k_k \oplus W^k_k \fmap{k} W^k_{k+1}
\qquad
\mbox{or}
\qquad
V^k_k \oplus W^k_k \gmap{k} W^k_{k+1}.
\]
In the first case, it follows that $V^k_k = \Ker(f_k)$ and therefore $\rf(\Vv^k) = \Rr_k \cap \Ker(f_k)$. In the second case, $V^k_k$ is a complement to $\Img(g_k)$ in $V_k$, so
$\Rr_k = \rf(\Vv^k) \oplus (\Rr_k \cap \Img(g_k))$.
Thus
\[
\dim(\rf(\Vv^k)) = 
\left\{
\begin{array}{l}
\dim(\Rr_k \cap \Ker(f_k) )
\\
\dim(\Rr_k) - \dim(\Rr_k \cap \Img(g_k))
\end{array}
\right\}
=
(c^k_1, \dots, c^k_k)
\]
(whichever middle term is applicable). When $k=n$, moreover, we have
\[
\dim(\rf(\Vv^n)) = \dim(\Rr_n) = (c^n_1, \dots, c^n_n).
\]
Thus, at last,
\[
\Vv = \bigoplus_{1 \leq k \leq n} \Vv^k
\cong
  \bigoplus_{1 \leq k \leq n}
  \left\{
  \bigoplus_{1 \leq i \leq k} c_i^k\, \Ii(b_i^k, k)
  \right\}
\]
using Theorem~\ref{thm:streamlinedremak} to decompose the $\Vv^k$.
\end{proof}

\begin{proof}[Proof of Addendum~\ref{add:main}]
Write $(w^k_1, \dots, w^k_k) = \dim(\rf(\Ww^k[k]))$. Since $\Rr_k = \rf(\Vv^k) \oplus \rf(\Ww^k[k])$ we can take dimensions and obtain the formula
\[
(r^k_1, \dots, r^k_k) = (c^k_1, \dots, c^k_k) + (w^k_1, \dots, w^k_k).
\]
Note also that $\Rr_{k+1} = \rf(\Vv[k+1]) = \rf(\Ww^k[k+1])$. Moreover, $\Ww^k$ is streamlined up to time~$k+1$.
It follows that
\[
(r^{k+1}_1, \dots, r^{k+1}_{k+1})
= \dim(\rf(\Ww^k[k+1]))
= 
\left\{
\begin{array}{ll}
(w^k_1, \dots, w^k_k, ?) &\qquad \mbox{case~$f$}
\\
(?, w^k_1, \dots, w^k_k) &\qquad \mbox{case~$g$}
\end{array}
\right.
\]
and therefore
\[
c^k_i = r^k_i - w^k_i = 
\left\{
\begin{array}{ll}
r^k_i - r^{k+1}_i &\qquad \mbox{case~$f$} \\
r^k_i - r^{k+1}_{i+1} &\qquad \mbox{case~$g$}
\end{array}
\right.
\]
which is the desired formula.
\end{proof}

\subsection{Abstract vector spaces}
\label{subsec:algabst}

We now transcribe Theorem~\ref{thm:main} as an abstract algorithm for determining the interval structure of a $\tau$-module $\Vv$ of length~$n$. This algorithm will serve as a skeleton for the more concrete algorithms developed later.

\begin{Algorithm}
\label{alg:main}
We proceed through $k = 1, 2, \dots, n$, computing the filtration $\Rr_k = \rf(\Vv[k])$, the birth-time index $\bt(\tau[k])$, and the dimensions $c^k_i$ iteratively.

\textsc{begin}

\textbf{\small Initialisation} ($k=1$): 
\begin{itemize}
\item[(1)]
$\Rr_1 = (0, V_1)$.
\item[(2)]
$\bt(\tau[1]) = (1)$.
\end{itemize}

\textbf{\small Iterative step} ($k = 1, 2, \dots, n-1$):
\begin{itemize}
\item[(3)]
Calculate $\Rr_{k+1}$ from $\Rr_k = (R^k_0, R^k_1, \dots, R^k_k)$
using Definition~\ref{def:rfilt}:
\[
(R^{k+1}_0, R^{k+1}_1, \dots, R^{k+1}_{k+1}) = 
\left\{
\begin{array}{ll}
(f_k(R^k_0), f_k(R^k_1), \dots, f_k(R^k_k), V_{k+1})
 &\qquad\mbox{case~$f$}
\\
(0, g_k^{-1}(R^k_0), g_k^{-1}(R^k_1), \dots, g_k^{-1}(R^k_k))
 &\qquad\mbox{case~$g$}
\end{array}
\right.
\]

\item[(4)]
Calculate $\bt(\tau[k+1]))$ from $\bt(\tau[k]) = (b^k_1, b^k_2, \dots, b^k_k)$
using Definition~\ref{def:birthtime}:
\[
(b^{k+1}_1, \dots, b^{k+1}_{k+1}) = 
\left\{
\begin{array}{ll}
(b^k_1, \dots, b^k_k, k+1) &\qquad\mbox{case~$f$}
\\
(k+1, b^k_1, \dots, b^k_k) &\qquad\mbox{case~$g$}
\end{array}
\right.
\]

\item[(5)]
Calculate $(c^k_1, \dots, c^k_k)$ using the formula in Theorem~\ref{thm:main}:
\[
(c_1^k, c_2^k, \dots, c_k^k) = 
\left\{
\begin{array}{ll}
\dim(\Rr_k \cap \Ker(f_k) )
 &\qquad\mbox{case~$f$}
\\
\dim(\Rr_k) - \dim(\Rr_k \cap \Img(g_k))
 &\qquad\mbox{case~$g$}
\end{array}
\right.
\]
Alternatively, use the formula in Addendum~\ref{add:main}:
\[
c^k_i =
\left\{
\begin{array}{ll}
r^k_i - r^{k+1}_{i}
 &\qquad\mbox{case~$f$}
\\
r^k_i - r^{k+1}_{i+1}
 &\qquad\mbox{case~$g$}
\end{array}
\right.
\]
Here $(r^k_1, \dots, r^k_k) = \dim(\Rr_k)$.
\end{itemize}

\textbf{\small Terminal step} ($k=n$):
\begin{itemize}
\item[(6)]
Calculate $(c^n_1, \dots, c^n_n) = \dim(\Rr(\Vv))$.
\end{itemize}

\textbf{\small Print results}:
\begin{itemize}
\item[(7)]
For $1 \leq i \leq k \leq n$, the interval $\Ii(b^k_i, k)$ occurs with multiplicity~$c^k_i$.
\end{itemize}

\textsc{end}

\end{Algorithm}

Note that steps (3--5) have an `$f$' verson and a `$g$' version, depending on the direction of the map~$p_k$.

This abstract algorithm does not specify how the filtered vector spaces $\rf(\Vv[k+1])$ are stored, nor how the maps $f_k$ or~$g_k$ (which are used in steps (3) and~(5)) are represented.
In any concrete setting, it is necessary to specify data structures. A good choice will facilitate the calculations in steps (3) and~(5). In the next section, we work out the details in a simple scenario.

\subsection{Concrete vector spaces}
\label{subsec:algconc}

In this section we describe an algorithm to solve the following concrete problem. Let $\tau$ be a type of length~$n$. We specify a $\tau$-module~$\Vv$ as follows. Set $V_i = \Ff^{a_i}$ for integers $a_i \geq 0$. For each~$i$, the map $f_i$ is defined by an $a_{i+1}$-by-$a_i$ matrix $M_i$ or else the map $g_i$ is defined by an $a_{i}$-by-$a_{i+1}$ matrix $N_{i}$. We are to determine $\Pers(\Vv)$, given $\tau$ and the matrices $M_i$ or $N_i$.

We follow Algorithm~\ref{alg:main}. The substantial task is to calculate the sequence of right-filtrations $\Rr_k = \rf(\Vv[k])$, for step~(3). Everything else is book-keeping: the birth-time indices~$b_i^k$ are calculated according to step~(4); and the filtration dimensions~$r^k_i$ (and hence the~$c_i^k$) will be easy to read off from the stored description of the filtrations.

\subsubsection*{Basis transformations}
The algorithm operates on two levels. On the \emph{conceptual} level, we proceed by modifying the bases of the spaces~$V_i$ by elementary basis transformations. Initially each basis $\Bb_i$ is the standard basis of~$\Ff^{a_i}$. We perform modifications on $\Bb_2, \Bb_3, \dots, \Bb_{n-1}$ in sequence. On the \emph{pragmatic} level, 
what we actually do is apply elementary row and column operations to the matrices $M_i$ or $N_i$. We make no attempt to track the bases themselves; instead we implement the effect of those changes on the matrices.

Suppose we apply elementary basis transformations to $\Bb_{k+1}$ on the conceptual level. On the pragmatic level, we must perform
\[
\mbox{row operations on $M_{k}$}
\qquad
\mbox{or}
\qquad
\mbox{column operations on $N_{k}$}
\]
and simultaneously perform
\[
\mbox{column operations on $M_{k+1}$}
\qquad
\mbox{or}
\qquad
\mbox{row operations on $N_{k+1}$}
\]
to enact those transformations. Thus, at every stage we must make parallel changes to two matrices simultaneously. Usually we are working to put $M_{k}$ or $N_{k}$ in a particular form, and while doing so the changes have to be mirrored in $M_{k+1}$ or $N_{k+1}$ (paying no attention yet to the structure of that matrix).

We now make this precise. The \textbf{elementary transformation} $E_i(p,q,\lambda)$ is defined as follows. On the conceptual level, this is a modification of $\Bb_i = (\beta_1, \dots, \beta_{a_i})$ involving basis vectors $\beta_p$ and~$\beta_q$:
\begin{eqnarray*}
\beta_p &\leftarrow& \beta_p\\ 
\beta_q &\leftarrow& \beta_q + \lambda \beta_p
\end{eqnarray*}
On the pragmatic level, if $L$ is a matrix representing a linear map $V_i \to W$ for some~$W$ (this will be $N_{i-1}$ or $M_i$ in our situation), then we modify the columns of~$L$ accordingly:
\begin{eqnarray*}
\col_p &\leftarrow& \col_p\\ 
\col_q &\leftarrow& \col_q + \lambda \col_p
\end{eqnarray*}
Else, if $L$ represents a linear map of the form $W \to V_i$ (this will be $M_{i-1}$ or $N_i$ in our situation) then we must apply the dual transformation to the rows of~$L$:
\begin{eqnarray*}
\row_p &\leftarrow& \row_p - \lambda \row_q \\ 
\row_q &\leftarrow& \row_q
\end{eqnarray*}
In spirit, we right-multiply by the matrix
$\left[ \begin{array}{rr} 1 & \lambda \\ 0 & 1\end{array} \right]$
to modify columns, or else left-multiply by the inverse matrix
$\left[ \begin{array}{rr} 1 & -\lambda \\ 0 & 1\end{array} \right]$
to modify rows.

Besides the elementary transformations $E_i(p,q,\lambda)$, it is sometimes appropriate to permute the basis elements. The operation $P_i(p,q)$ of interchanging $\beta_p$ with~$\beta_q$ is realised pragmatically by interchanging $\col_p$ with~$\col_q$, or $\row_p$ with~$\row_q$, as appropriate. 

\subsubsection*{Filtrations}

The filtration $\Rr_k = \rf(\Vv[k])$ on~$V_k$ is to be represented as follows. We require the basis $\Bb_k = (\beta_1, \dots, \beta_{a_i})$ to be compatible with the filtration, in a sense that will become clear. Assuming such a basis, the filtration $\Rr_k = (R_0, R_1, \dots, R_k)$ is represented as a non-decreasing function
\[
\phi_k: \{ 1, 2, \dots, a_i \} \to \{1, \dots, k\}
\]
so that
\[
R_i = \Span \left\{ \beta_p \mid \phi_k(p) \leq i  \right\}
\]
for $i = 1, \dots, k$. In other words: the first few basis elements (those $\beta_p$ with $\phi_k(p)=1$) form a basis for~$R_1$; the next few basis elements extend this to a basis for~$R_2$, and so on.
The dimension $r_i^k = \dim(R_i/R_{i-1})$ can be read off as the cardinality of $\phi_k^{-1}(i)$.

\subsubsection*{Gaussian elimination}

Step~(3) boils down to the following task. Suppose that $\Bb_k$ and $\phi_k$ together represent the filtration $\Rr_k$; then modify $\Bb_{k+1}$ and determine $\phi_{k+1}$ to represent $\Rr_{k+1}$.
We now explain how to do this.


Case~$M$: the matrix $M_k$ represents a linear map $V_k \to V_{k+1}$. We assume that $\Bb_k$ is compatible with the filtration~$\Rr_k$, and that $\phi_k$ identifies the filtration. This gives a block structure
\[
M_k = \left[ 
\begin{array}{cccc}
K_1 & K_2 & \cdots & K_k
\end{array}
\right]
\]
where $K_i$ gathers together the columns~$q$ with $\phi_k(q)=i$.
Using row operations only, put $M_k$ into (unreduced) row echelon form. This means:
\begin{itemize}
\item
Each of the top $r$ rows contains a 1 (the \emph{pivot}) as its leftmost nonzero entry.
\item
Each pivot lies strictly to the left of the pivots of the rows below it.
\item
The lowest $a_{k+1}-r$ rows are entirely zero.
\end{itemize}
These row operations correspond to elementary operations $E_{k+1}(p,q,\lambda)$, and the effect of these operations is felt on the next matrix $M_{k+1}$ or~$N_{k+1}$, which must be modified accordingly.
We now define $\phi_{k+1}$ as follows:
\[
\phi_{k+1}(p) = 
\left\{
\begin{array}{ll}
\phi_k(q) & \qquad \mbox{if row~$p$ has a pivot in column~$q$,}
\\
k+1 &  \qquad \mbox{if row~$p$ has no pivot.}
\end{array}
\right.
\]
See Figure~\ref{fig:gaussM}.
\begin{figure}
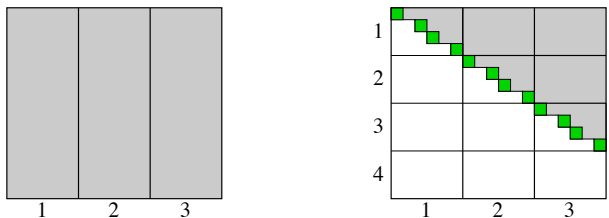

\centerline{
\includegraphics[scale=.5]{gaussM0.pdf}
\qquad
\includegraphics[scale=.5]{gaussM1.pdf}
}
\caption{Using row echelon form to compute $\Rr_{k+1}$ from $\Rr_k$.}
\label{fig:gaussM}
\end{figure}
It is evident in the figure that $R^k_i$ maps onto $R^{k+1}_i$ for all~$i$.

Case~$N$: the matrix $N_k$ represents a linear map $V_{k+1} \to V_{k}$. We assume that $\Bb_k$ is compatible with the filtration~$\Rr_k$, and that $\phi_i$ identifies the filtration. This time we have a vertical block structure
\[
N_k = \left[ 
\begin{array}{c}
L_1 \\ L_2 \\ \vdots \\ L_k
\end{array}
\right]
\]
where $L_i$ gathers together the rows~$q$ with $\phi_k(q) = i$.
Using column operations only, put $N_k$ into the column echelon form defined as follows (this echelon form begins on the bottom left):
\begin{itemize}
\item
Each of the leftmost $r$ columns contains a 1 (the pivot) as its \emph{lowest} nonzero entry.
\item
Each pivot lies strictly lower than the pivots of the columns to the right of it.
\item
The rightmost $a_{k+1}-r$ rows are entirely zero.
\end{itemize}
These column operations correspond to elementary operations $E_{k+1}(p,q,\lambda)$, and the effect of these operations is felt on the next matrix $M_{k+1}$ or~$N_{k+1}$, which must be modified accordingly.
We now define $\phi_{k+1}$ as follows:
\[
\phi_{k+1}(p) = 
\left\{
\begin{array}{ll}
\phi_k(q)+1 & \qquad \mbox{if column~$p$ has a pivot in row~$q$,}
\\
1 &  \qquad \mbox{if column~$p$ has no pivot.}
\end{array}
\right.
\]
See Figure~\ref{fig:gaussN}.
\begin{figure}
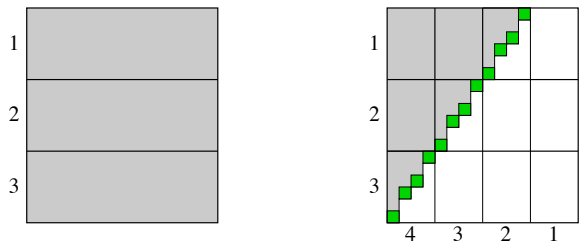

\centerline{
\includegraphics[scale=.5]{gaussN0.pdf}
\qquad
\includegraphics[scale=.5]{gaussN1.pdf}
}
\caption{Using column echelon form to compute $\Rr_{k+1}$ from $\Rr_k$.}
\label{fig:gaussN}
\end{figure}
It is evident in the figure that $R^{k+1}_{i+1}$ is the largest subspace which maps into $R^{k}_i$, for all~$i$.

This concludes our treatment of the concrete form of the zigzag algorithm.

\section{Further Algebraic Techniques}
\label{sec:further}

\subsection{Localization at a single index}
\label{subsec:localize}

Let $\Vv$ be a zigzag module of length~$n$ and let $1 \leq k \leq n$. We consider the problem of determining the set of intervals in  $\Pers(\Vv)$ which contain~$k$, without necessarily computing $\Pers(\Vv)$ itself. We shall see that all the necessary information is contained in a pair of filtrations on the vector space~$V_k$.

\begin{Definition}
Let $\Vv$ be a zigzag module of length~$n$. The \textbf{left-filtration} of $\Vv$ is a filtration on~$V_1$ of depth~$n$, defined as
\[
\lf(\Vv) = \rf(\bar{\Vv})
\]
where $\bar\Vv$ is the reversal of~$\Vv$; so $\bar{V}_i = V_{n+1-i}$, with maps $\bar{f}_i = g_{n-i}$ or $\bar{g}_i = f_{n-i}$.
\end{Definition}

For any~$k$ we therefore have two natural filtrations on~$V_k$:
\[
\begin{array}{rclcl}
\Rr_k &=&  (R_0, R_1, \dots, R_k) &=& \rf(\Vv[1,k]),
\\
\Ll_k &=&  (L_0, L_1, \dots, L_{n+1-k}) &=& \lf(\Vv[k,n]);
\end{array}
\]
the right-filtration over the index set $\{1, \dots, k\}$ and the left-filtration over the index set $\{k, \dots, n\}$.
We also have birth-time and death-time indices
\[
\begin{array}{rclcl}
\bt_k &=& (b_1, \dots, b_k) &=& \bt(\tau[1,k])
\\
\dt_k &=& (d_1, \dots, d_{n+1-k}) &=& n+1 - \bt(\bar\tau[k,n])
\end{array}
\]
which indicate the birth and death times associated with the respective subquotients of $\Rr_k$ and $\Ll_k$. These depend on the type~$\tau$ of~$\Vv$.

\begin{Example}
\label{ex:bifilt}
Consider the zigzag module
\[
\Vv \;=\; (\,
V_1 \fmap{1} V_2 \fmap{2} V_3 \gmap{3} V_4
\,).
\]
At $k=2$, for instance, we have
\[
\begin{array}{rcl}
\Rr_2 &=& (0, f_1(V_1), V_2)
\\
\Ll_2 &=& (0, f_2^{-1}(0), f_2^{-1} g_3 (V_4), V_2)
\end{array}
\]
and
\[
\begin{array}{rcl}
\bt_2 &=& (1, 2)
\\
\dt_2 &=& (2, 4, 3).
\end{array}
\]
\end{Example}

We can now state the main theorem of this section.

\begin{Theorem}[Localization at index~$k$]
\label{thm:kinterval}
Let $\Vv$ be a zigzag module of length~$n$ and let $1 \leq k \leq n$. Let $\Rr_k, \Ll_k$ denote the right- and left-filtrations at~$k$, and let $\bt_k, \dt_k$ denote the birth-time and death-time indices at~$k$.
Then, for all $i,j$ in the range $1 \leq i \leq k$, $1 \leq j \leq n+1-k$, the multiplicity of $[b_i, d_j]$ in $\Pers(\Vv)$ is equal to
\[
c_{ij}
= \dim(R_i \cap L_j)  - \dim(R_{i-1} \cap L_j)
 - \dim(R_i \cap L_{j-1})  + \dim(R_{i-1} \cap L_{j-1}).
\]
\end{Theorem}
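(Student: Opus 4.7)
The natural strategy is to use Gabriel's theorem (Theorem~\ref{thm:interval}) to decompose $\Vv$ into intervals, and then show that both filtrations $\Rr_k$ and $\Ll_k$ split simultaneously along this decomposition. Once this is established, the formula reduces to a two-variable inclusion-exclusion.

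First I would write $\Vv \cong \bigoplus_\alpha \Ii(b'_\alpha, d'_\alpha)$ by Theorem~\ref{thm:interval}. This induces a vector-space decomposition $V_k = \bigoplus_\alpha V_k^\alpha$, in which $V_k^\alpha$ is one-dimensional when $b'_\alpha \leq k \leq d'_\alpha$ and zero otherwise. The multiplicity we wish to compute is then
\[
c_{ij} = \#\bigl\{\alpha : (b'_\alpha, d'_\alpha) = (b_i, d_j)\bigr\}.
\]
Note that the values of $\bt_k$ (resp.\ $\dt_k$) are permutations of $\{1,\dots,k\}$ (resp.\ $\{k,\dots,n\}$), so the indexing $(i,j) \leftrightarrow (b_i, d_j)$ is a genuine bijection.

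Next I would invoke Proposition~\ref{prop:directsum} applied to $\Vv[1,k]$: the right-filtration splits as $\Rr_k = \bigoplus_\alpha \rf(\Ii(b'_\alpha, d'_\alpha)[1,k])$. For a summand containing~$k$, the truncation $\Ii(b'_\alpha,d'_\alpha)[1,k]$ is the interval $\Ii_{\tau[1,k]}(b'_\alpha, k)$, and Lemma~\ref{lem:intervalformula} identifies its right-filtration with $\Jj(i_\alpha, k)$, where $i_\alpha$ is the unique index satisfying $b_{i_\alpha} = b'_\alpha$. Consequently $V_k^\alpha \subseteq R_i$ precisely when $i_\alpha \leq i$. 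Applying the same reasoning to the reversed module $\overline{\Vv[k,n]}$ and using the definition of the left-filtration, I would obtain the symmetric statement: $V_k^\alpha \subseteq L_j$ precisely when $j_\alpha \leq j$, where $j_\alpha$ is the unique index satisfying $d_{j_\alpha} = d'_\alpha$.

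Combining these, for each $\alpha$ with $b'_\alpha \leq k \leq d'_\alpha$ the line $V_k^\alpha$ lies in $R_i \cap L_j$ iff $i_\alpha \leq i$ and $j_\alpha \leq j$. Summing over summands yields
\[
\dim(R_i \cap L_j) = \sum_{i' \leq i,\; j' \leq j} c_{i'j'},
\]
and the claimed formula follows by the standard two-variable finite difference (inclusion-exclusion).

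The main obstacle will be the simultaneous compatibility of the two filtrations with the interval decomposition. The forward direction is Proposition~\ref{prop:directsum}; the backward direction requires the analogous statement for $\lf$, which is immediate by applying $\rf$ to the reversal $\bar\Vv$, but care is needed to track how the birth-time index of the reversed module translates into the death-time index $\dt_k$ of the original. Once this correspondence is articulated, the proof is essentially a bookkeeping exercise leveraging Lemma~\ref{lem:intervalformula} in both directions.
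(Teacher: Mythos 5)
Your proposal is correct and takes essentially the same approach as the paper's proof: decompose $\Vv$ into intervals via Gabriel's theorem, observe via Proposition~\ref{prop:directsum} and Lemma~\ref{lem:intervalformula} that both $\Rr_k$ and $\Ll_k$ split along the resulting one-dimensional decomposition of $V_k$, and conclude by counting and two-variable inclusion-exclusion. The only presentational difference is that the paper first invokes Proposition~\ref{prop:kstream} to strip off the summands not containing $k$ before decomposing, whereas you handle them implicitly by noting that such summands contribute zero at index~$k$.
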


\begin{Remark}
Equivalently, $c_{ij} = \dim((R_i \cap L_j) / ((R_{i-1} \cap L_j) + (R_i \cap L_{j-1})))$, the dimension of the $(i,j)$-th bifiltration subquotient.
\end{Remark}

This theorem answers the original question, because every interval containing~$k$ can be written as $[b_i, d_j]$ for some choice of $i,j$. 
We now work towards a proof of Theorem~\ref{thm:kinterval}.

\begin{Proposition}
\label{prop:kstream}
It is sufficient to prove Theorem~\ref{thm:kinterval} in the special case where $\Vv$ is right-streamlined over $\{1, \dots, k\}$ and left-streamlined over $\{k, \dots, n\}$.
\end{Proposition}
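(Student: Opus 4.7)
The plan is to apply Lemma~\ref{lem:cascade} (the cascade decomposition) from both ends of $\Vv$, peeling off any summands whose support lies entirely to the left of, or entirely to the right of, index~$k$. What remains will enjoy both required streamlining properties, while the data entering Theorem~\ref{thm:kinterval}---namely $\Rr_k$, $\Ll_k$, and the intervals of $\Pers(\Vv)$ passing through~$k$---will be unaffected by the peeling.

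First I would apply Lemma~\ref{lem:cascade} to $\Vv$ to obtain $\Vv = \Vv^1 \oplus \cdots \oplus \Vv^n$, where each $\Vv^j$ is supported on $\{1,\ldots,j\}$ and right-streamlined there. The summands $\Vv^1, \ldots, \Vv^{k-1}$ are zero at every index~$\geq k$, so by Proposition~\ref{prop:directsum} (and its obvious analogue for $\lf$, obtained by reversing) they contribute nothing to $\Rr_k$ or $\Ll_k$; nor can any of their interval summands contain~$k$. Discarding them, set $\Vv' = \Vv^k \oplus \cdots \oplus \Vv^n$. Each surviving summand $\Vv^j$ with $j\geq k$ is right-streamlined on $\{1,\ldots,j\} \supseteq \{1,\ldots,k\}$, so by Proposition~\ref{prop:splitstream} the same holds for $\Vv'$.

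Next I would apply the mirror form of Lemma~\ref{lem:cascade} to $\Vv'$---reversing it, applying the lemma, and reversing back---to obtain $\Vv' = \Ww^1 \oplus \cdots \oplus \Ww^n$ with each $\Ww^i$ supported on $\{i,\ldots,n\}$ and left-streamlined there. The summands $\Ww^{k+1}, \ldots, \Ww^n$ vanish at every index~$\leq k$ and may be discarded for exactly the same reason. Let $\Vv'' = \Ww^1 \oplus \cdots \oplus \Ww^k$; this is left-streamlined on $\{k,\ldots,n\}$ by construction, and since it is a direct summand of $\Vv'$, Proposition~\ref{prop:splitstream} shows it inherits right-streamlining on $\{1,\ldots,k\}$ from $\Vv'$. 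Thus $\Vv''$ has both required streamlining properties and satisfies $\Rr_k(\Vv)=\Rr_k(\Vv'')$, $\Ll_k(\Vv)=\Ll_k(\Vv'')$, with the same intervals through~$k$ as $\Vv$; so Theorem~\ref{thm:kinterval} for $\Vv$ will follow from the streamlined case.

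The hard part will be nothing especially deep: the key book-keeping is to confirm that Proposition~\ref{prop:splitstream} really does let us inherit right-streamlining through the second peel, and to verify the additivity of $\lf$ needed in step two. Both are immediate once the cascade is carried out symmetrically from each end, since reversal and $\rf$ are both compatible with direct sums.
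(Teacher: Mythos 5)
Your proposal is correct and matches the paper's own proof: both invoke Lemma~\ref{lem:cascade} from the left to peel off summands supported in $\{1,\ldots,k-1\}$, then "repeat from the other side" to peel off summands missing $k$, with Proposition~\ref{prop:splitstream} guaranteeing that right-streamlining is inherited by the direct summand that survives the second peel. The paper compresses the second application into one sentence; you have simply written out the book-keeping it leaves implicit.
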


\begin{proof}
It is clear from Lemma~\ref{lem:cascade} that we can write $\Vv = \Uu \oplus \Ww$ where $\Uu$ is supported in $\{1, \dots, k-1\}$ and $\Ww$ is right-streamlined over $\{1, \dots, k\}$. Indeed, take $\Uu = \Vv^1 \oplus \dots \oplus \Vv^{k-1}$ and $\Ww = \Vv^k \oplus \dots \oplus \Vv^n$.
Moreover, it is sufficient to prove Theorem~\ref{thm:kinterval} for~$\Ww$, because the filtrations $\Rr_k, \Ll_k$ remain unchanged from~$\Vv$, and the discarded term $\Uu$ decomposes into intervals which do not contain~$k$. Thus, we may assume that $\Vv$ is right-streamlined over $\{ 1, \dots, k\}$.

Repeating this argument from the other side, we may further assume that $\Vv$ is left-streamlined over $\{k, \dots, n\}$. 
\end{proof}

\begin{proof}[Proof of Theorem~\ref{thm:kinterval}]
Assume that $\Vv$ satisfies the condition in Proposition~\ref{prop:kstream}. It follows that every interval in $\Pers(\Vv)$ contains~$k$: any other interval in the decomposition would cause a failure of the streamline condition.
We can therefore write the interval decomposition of~$\Vv$ as
\[
\Vv = \bigoplus_{a \in A} \Ii_a \cong \bigoplus_{a \in A} \Ii(b_{p(a)}, d_{q(a)})
\]
where $A$ indexes the summands, and $p: A \to \{1, \dots, k\}$ and $q: A \to \{1, \dots, n-k+1\}$ identify the interval type of each summand in terms of the birth-time and death-time indices. It is apparent from this formulation that
\[
c_{ij} = \# \{ a \in A \mid p(a) = i,\, q(a) = j\}
\]
and it remains to compute this in terms of the dimensions $\dim(R_i \cap L_j)$.

The interval decomposition restricts at index~$k$ to a direct sum decomposition of $V_k$ into 1-dimensional subspaces~$U_a$, generated by elements $x_a$, say.
Then
\[
\Rr_k
= \bigoplus_{a \in A} \rf(\Ii_a[1,k])
= \bigoplus_{a \in A} \Rr_k \cap U_a
\cong \bigoplus_{a \in A} \Jj(p(a), k)
\]
where the final isomorphism comes from Lemma~\ref{lem:intervalformula}.
Now, the filtration subspace~$R_i$ is spanned by the terms isomorphic to $\Jj(p,k)$ with $p \leq i$. In other words, for $i = 1, \dots, k$ we have
\[
R_i = \Span \left\{ x_a \mid p(a) \leq i \right \}.
\]
A similar argument proceeding from the other direction gives the analogous formula
\[
L_j = \Span \left\{ x_a \mid q(a) \leq j \right \},
\]
for $j = 1, \dots, n+1-k$. Since the $x_a$ are independent, these formulas give bases for $R_i, L_j$.

We now claim that
\[
R_i \cap L_j = \Span \{ x_a \mid p(a) \leq i, \, q(a) \leq j\}
\]
for all $i,j$.
The inclusion $\Span \subseteq R_i \cap L_j$ is obvious, because each of the spanning vectors~$x_a$ belongs to both $R_i$ and~$L_j$. In the other direction, if $x \in R_i \cap L_j$ then write $x = \sum_{a \in A} \lambda_a x_a$. Since $x \in R_i$, all the coefficients $\lambda_a$ with $p(a) > i$ must be zero. Since $x \in L_j$, all the coefficients $\lambda_a$ with $q(a) > j$ must be zero. Thus $x \in \Span \{ x_a \mid p(a) \leq i, \, q(a) \leq j\}$. This establishes the reverse inclusion $R_i \cap L_j \subseteq \Span$ and hence the equality.

Then
\[
\dim(R_i \cap L_j)
= \# \{ x_a \mid p(a) \leq i, \, q(a) \leq j\}
= \sum_{p = 1}^{i} \sum_{q=1}^{j} c_{pq}
\]
for all $i,j$. The formula in the theorem follows easily from this.
\end{proof}

\begin{Remark}
The salient fact behind this result is that it is possible to find a direct sum decomposition of~$V_k$ which simultaneously decomposes the filtered spaces $\Rr_k, \Ll_k$ into intervals within their respective categories $\filt_k$, $\filt_{n+1-k}$.
Here we achieved this by appealing to the interval decomposition of~$\Vv$, but this can also be proved directly for an arbitrary pair of filtrations on a single vector space.
The analogous statement for a triple of filtrations is false. For example
\[
(0, \Ff \oplus 0, \Ff^2),
\qquad
(0, 0 \oplus \Ff, \Ff^2),
\qquad
(0, \Delta, \Ff^2),
\]
(where $\Delta = \{ (x,x) \mid x \in \Ff \}$) cannot be simultaneously decomposed into intervals.
\end{Remark}

\subsection{The Diamond Principle}
\label{subsec:diamond}

Consider the following diagram:
\[
\begin{diagram}
\node[4]{W_k}
\\
\node{V_1}
  \arrow{e,t,<>}{p_1}
\node{\cdots}
  \arrow{e,t,<>}{p_{k-2}}
\node{V_{k-1}}
  \arrow{ne,t}{f_{k-1}}
\node[2]{V_{k+1}}
  \arrow{nw,t}{g_{k}}
  \arrow{e,t,<>}{p_{k+1}}
\node{\cdots}
  \arrow{e,t,<>}{p_{n-1}}
\node{V_n}
\\
\node[4]{U_k}
  \arrow{nw,t}{g_{k-1}}
  \arrow{ne,t}{f_{k}}
\end{diagram}
\]
Let $\Vv^+$ and $\Vv^-$ denote the two zigzag modules contained in the diagram:
\begin{eqnarray*}
\Vv^+ &=&
(\, V_1 \lrmap \dots \lrmap V_{k-1}
\fmap{k-1} W_k \gmap{k}
V_{k+1} \lrmap \dots \lrmap V_n
\,)
\\
\Vv^- &=&
(\, V_1 \lrmap \dots \lrmap V_{k-1}
\gmap{k-1} U_k \fmap{k}
V_{k+1} \lrmap \dots \lrmap V_n
\,)
\end{eqnarray*}
We wish to compare $\Pers(\Vv^+)$ with $\Pers(\Vv^-)$, particularly with respect to intervals that meet $\{k-1, k, k+1\}$. This requires a favourable condition on the four maps in the middle diamond.

\begin{Definition}
We say that the diagram
\[
\divide\dgARROWLENGTH by2
\begin{diagram}
\node{V_{k+1}}
  \arrow{e,t}{g_{k}}
\node{W_k}
\\
\node{U_k}
  \arrow{n,l}{f_{k}}
  \arrow{e,t}{g_{k-1}}
\node{V_{k-1}}
  \arrow{n,r}{f_{k-1}}
\end{diagram}
\]
is \textbf{exact} if $\Img(D_1) = \Ker(D_2)$ in the following sequence
\[
\begin{diagram}
\node{U_k}
  \arrow{e,t}{D_1}
\node{V_{k-1} \oplus V_{k+1}}
  \arrow{e,t}{D_2}
\node{W_k}
\end{diagram}
\]
where $D_1(u) = g_{k-1}(u) \oplus f_k(u)$ and $D_2(v \oplus v') = f_{k-1}(v) - g_k(v')$.
\end{Definition}

\begin{Theorem}[The Diamond Principle]
\label{thm:diamond}
Given $\Vv^+$ and $\Vv^-$ as above, suppose that the middle diamond is exact. Then there is a partial bijection of the multisets $\Pers(\Vv^+)$ and $\Pers(\Vv^-)$, with intervals matched according to the following rules:
\begin{itemize}
\item
Intervals of type $[k,k]$ are unmatched.
\item
Type $[b,k]$ is matched with type $[b,k-1]$ and vice versa, for $b \leq k-1$.
\item
Type $[k,d]$ is matched with type $[k+1,d]$ and vice versa, for $d \geq k+1$.
\item
Type $[b,d]$ is matched with type $[b,d]$, in all other cases.
\end{itemize}
%
%
It follows that the restrictions $\Pers(\Vv^+)|_K$, $\Pers(\Vv^-)|_K$ to the set $K = \{1, \dots, n\} \setminus \{k\}$ are equal.
\end{Theorem}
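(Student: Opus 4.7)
My plan is to apply the localization theorem (Theorem~\ref{thm:kinterval}) at indices $k-1$ and $k+1$ to establish the claimed matchings for intervals touching the middle, and to handle intervals away from~$k$ by a direct restriction argument. The crucial technical input will be that diamond exactness forces two key filtrations to coincide.

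\emph{Filtration identity.} I will first show that $\Ll^+_{k-1} = \Ll^-_{k-1}$ as filtrations on~$V_{k-1}$, and symmetrically $\Rr^+_{k+1} = \Rr^-_{k+1}$ on~$V_{k+1}$. Since $\Vv^+$ and $\Vv^-$ share the right half $V_{k+1} \rmap \cdots \rmap V_n$, both modules produce a common filtration $\Ll^\ast$ on~$V_{k+1}$. Propagating $\Ll^\ast$ leftward through the diamond via Definition~\ref{def:rfilt}, the entries of $\Ll^+_{k-1}$ take the form $f_{k-1}^{-1}(g_k(L^\ast_i))$, while those of $\Ll^-_{k-1}$ take the form $g_{k-1}(f_k^{-1}(L^\ast_i))$. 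The crux of the argument is the identity
\[
f_{k-1}^{-1}(g_k(M)) = g_{k-1}(f_k^{-1}(M))
\]
for every subspace $M \leq V_{k+1}$: inclusion ``$\subseteq$'' uses $\Ker D_2 \subseteq \Img D_1$, while ``$\supseteq$'' uses commutativity $f_{k-1} g_{k-1} = g_k f_k$ (equivalent to $\Img D_1 \subseteq \Ker D_2$). Both inclusions are furnished by diamond exactness, so $\Ll^+_{k-1}$ and $\Ll^-_{k-1}$ agree entry by entry.

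\emph{Localization and index bookkeeping.} The birth-time index $\bt^+_{k-1} = \bt^-_{k-1}$ automatically, since $\Vv^\pm[1, k-1]$ agree. A short recursion on the reversed types $\overline{fg\sigma}$ and $\overline{gf\sigma}$, where $\sigma$ denotes the common right type, shows that $\dt^+_{k-1}$ and $\dt^-_{k-1}$ agree entry by entry except that the values $k-1$ and~$k$ are swapped between the first and last positions. Theorem~\ref{thm:kinterval} applied to the equal filtrations produces identical coefficients $c^+_{ij} = c^-_{ij}$ for all~$i,j$; combined with the death-time swap this yields the matchings $[b,k] \leftrightarrow [b,k-1]$ for $b \leq k-1$ and $[b,d] \leftrightarrow [b,d]$ for $b \leq k-1,\, d \geq k+1$ between $\Pers(\Vv^+)$ and $\Pers(\Vv^-)$. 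A symmetric argument at~$k+1$ (using $\Rr^+_{k+1} = \Rr^-_{k+1}$, common death times, and a $k \leftrightarrow k+1$ swap in the birth times) produces the matchings $[k,d] \leftrightarrow [k+1,d]$ for $d \geq k+1$.

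\emph{Intervals outside the diamond, and the restriction corollary.} Intervals $[b,d]$ with $d \leq k-2$ (or symmetrically $b \geq k+2$) are unseen by either localization. For the first case, $\Vv^+[1, k-1] = \Vv^-[1, k-1]$ and the remark following Proposition~\ref{prop:restriction1} give $\Pers(\Vv^+)|_{[1,k-1]} = \Pers(\Vv^-)|_{[1,k-1]}$; any interval $[b,d]$ with $d \leq k-2$ appears unchanged in this restriction, so its multiplicity is the same in~$\Vv^+$ and~$\Vv^-$. The other case is symmetric. Intervals of type $[k,k]$ meet neither $\{k-1\}$ nor $\{k+1\}$ and are left unmatched, as the theorem asserts. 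The restriction corollary then follows by inspection of the four matching rules: every matched pair $(I, I')$ satisfies $I \cap K = I' \cap K$, while $[k,k] \cap K = \emptyset$. The principal obstacle is the filtration identity above: correctly unwinding the two left-filtrations via $\lf(\Vv) = \rf(\bar\Vv)$ and recognising the resulting equality of entries as the concrete content of diamond exactness; once that is in hand, the remainder is mechanical bookkeeping driven by Theorem~\ref{thm:kinterval}.
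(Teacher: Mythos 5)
Your proposal is correct and takes essentially the same route as the paper: the key filtration identity you prove for the left-filtration, $f_{k-1}^{-1}(g_k(M)) = g_{k-1}(f_k^{-1}(M))$, is the mirror image of the paper's Lemma~\ref{lem:diamond} (which proves $f_k g_{k-1}^{-1}(R) = g_k^{-1} f_{k-1}(R)$ and invokes symmetry for the other), and the rest — localization via Theorem~\ref{thm:kinterval} at $k-1$ and~$k+1$, the birth/death-time swap from Proposition~\ref{prop:diamond}, and restriction away from~$k$ — matches the paper's proof step for step.
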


Figures \ref{fig:diamond_barcode} and~\ref{fig:diamond_pd} illustrate Theorem~\ref{thm:diamond} in terms of barcodes and persistence diagrams, respectively.

\begin{figure}
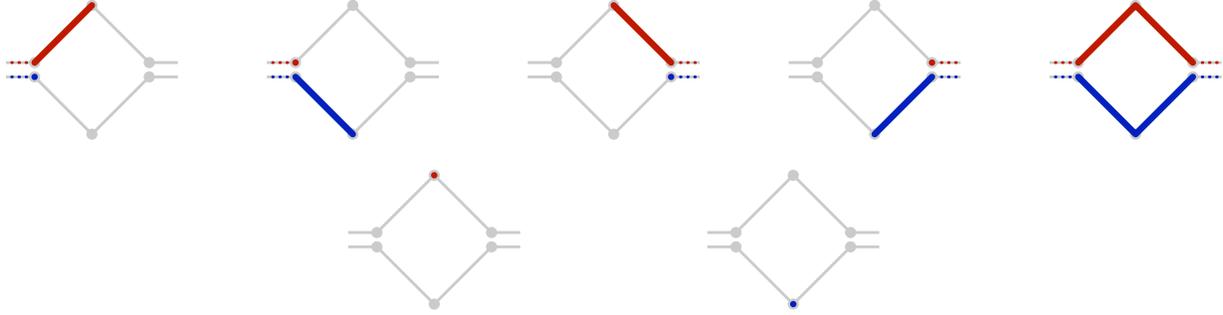

\centerline{
\includegraphics[scale=0.3]{mvt2.pdf}
\hfill
\includegraphics[scale=0.3]{mvt4.pdf}
\hfill
\includegraphics[scale=0.3]{mvt3.pdf}
\hfill
\includegraphics[scale=0.3]{mvt5.pdf}
\hfill
\includegraphics[scale=0.3]{mvt1.pdf}
}
\vspace{2ex}
\centerline{
\hfill\hfill
\includegraphics[scale=0.3]{mvt6a.pdf}
\hfill
\includegraphics[scale=0.3]{mvt6b.pdf}
\hfill\hfill
}
\caption{%
Interval matching between $\Pers(\Vv^+)$ and $\Pers(\Vv^-)$: (top row) the five cases where matching occurs; (bottom row) unmatched intervals $[k,k]$.}
\label{fig:diamond_barcode}
\end{figure}

\begin{figure}
\centerline{
\includegraphics[scale=0.5]{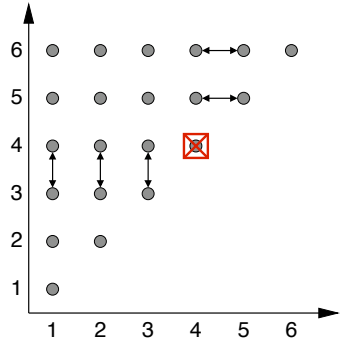}
}
\caption{From $\Pers(\Vv^+)$ to $\Pers(\Vv^-)$, for $n=6$, $k=4$: points in the persistence plane move according to the arrows; the multiplicity of the point marked~$\boxtimes$ changes unpredictably.}
\label{fig:diamond_pd}
\end{figure}

\begin{Remark}
The $\Ii(k,k)$ summands in $\Pers(\Vv^+)$ span the cokernel of~$D_2$, whereas the $\Ii(k,k)$ summands in $\Pers(\Vv^-)$ span the kernel of~$D_1$. The hypothesis of Theorem~\ref{thm:diamond} does not bring about any relation between these spaces (which is why the $[k,k]$ intervals are unmatched).
In Section~\ref{subsec:strongdiamond}, however, we consider a situation in which the $[k,k]$ intervals can be tracked.
\end{Remark}

We use the localization technique of Section~\ref{subsec:localize} to prove Theorem~\ref{thm:diamond}. We begin with birth- and death-time indices.

\begin{Proposition}
\label{prop:diamond}
Let $\tau^+, \tau^-$ denote the zigzag types of~$\Vv^+, \Vv^-$ respectively. If we write
\[
(b_1, \dots, b_{k-1}) = \bt(\tau^+[1,k-1]) = \bt(\tau^-[1,k-1])
\]
for the birth-time index up to time~$k-1$, then
\begin{eqnarray*}
\bt(\tau^+[1,k+1]) &=& (k+1, b_1, \dots, b_{k-1}, k).
\\
\bt(\tau^-[1,k+1]) &=& (k, b_1, \dots, b_{k-1}, k+1),
\end{eqnarray*}
Similarly, if we write
\[
(d_1, \dots, d_{n-k}) = \dt(\tau^+[k+1,n]) = \dt(\tau^-[k+1,n])
\]
for the death-time index from time~$k+1$, then
\begin{eqnarray*}
\dt(\tau^+[k-1,n]) &=& (k-1, d_1, \dots, d_{n-k}, k).
\\
\dt(\tau^-[k-1,n]) &=& (k, d_1, \dots, d_{n-k}, k-1),
\end{eqnarray*}
\end{Proposition}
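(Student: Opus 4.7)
The proof reduces to a mechanical unpacking of the recursion in Definition~\ref{def:birthtime}, applied directly for the birth-time statements and to the reversed types for the death-time statements (using the identity $\dt = (n+1) - \bt(\bar{\cdot})$ from the definition of the death-time index in Section~\ref{subsec:localize}). My plan is to treat the birth-time and death-time halves separately but in exactly the same style, extending the common middle data by one map at a time.

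For the birth-time identities I would start from the common prefix $\bt(\tau^+[1,k-1]) = \bt(\tau^-[1,k-1]) = (b_1,\dots,b_{k-1})$, a vector of length $k-1$, and apply the recursion twice. In $\tau^+$ the maps $p_{k-1},p_k$ are $f,g$ respectively, so appending $f$ produces $(b_1,\dots,b_{k-1},k)$ and then appending $g$ produces $(k+1,b_1,\dots,b_{k-1},k)$; in $\tau^-$ the maps $p_{k-1},p_k$ are $g,f$, and the mirror computation yields $(k,b_1,\dots,b_{k-1})$ and then $(k,b_1,\dots,b_{k-1},k+1)$.

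For the death-time identities I would exploit the fact that passing to the reversal turns prepending into appending while swapping $f\leftrightarrow g$. Let $\bar\sigma$ denote the reversal of the common suffix $\tau^+[k+1,n]=\tau^-[k+1,n]$ (length $n-k$), and write $\bt(\bar\sigma) = (\beta_1,\dots,\beta_{n-k})$, so that $d_j = n+1-\beta_j$. The first two arrows of $\tau^+[k-1,n]$, namely $\fmap{k-1}$ and $\gmap{k}$, become (in reversed order and with directions flipped) the last two symbols $f,g$ of $\bar{\tau^+[k-1,n]}$. Appending $f$ then $g$ to $\bar\sigma$ via the $\bt$ recursion yields $\bt(\bar{\tau^+[k-1,n]}) = (n-k+2,\beta_1,\dots,\beta_{n-k},n-k+1)$, and subtracting from $n+1$ componentwise gives exactly $(k-1,d_1,\dots,d_{n-k},k)$, as claimed. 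The analogous computation for $\tau^-$ places $g,f$ at the tail of the reversal and produces $(k,d_1,\dots,d_{n-k},k-1)$.

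I do not expect any real obstacle. The one place requiring care is the identification of the tail of the reversed type, since both the order-reversal and the direction-swap $f\leftrightarrow g$ are in effect, so reversing the prefix $fg$ actually contributes the two-letter tail $fg$ (and reversing $gf$ contributes $gf$). Once that bookkeeping is nailed down, everything is direct substitution into the two short recursions.
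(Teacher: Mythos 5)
Your proof is correct and follows essentially the same approach as the paper's: the paper invokes the recursive definition of the birth-time index directly (treating death-times via the reversal), and you simply carry out the two-step recursion explicitly, correctly tracking that reversal both flips the letter order and swaps $f\leftrightarrow g$, so the prefix $fg$ contributes the tail $fg$ and $gf$ contributes $gf$.
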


\begin{proof}
This is immediate from the recursive definition of birth-time index. If we write $\tau_0 = \tau^+[1,k-1] = \tau^-[1,k-1]$ then $\tau^+[1,k+1] = \tau_0 fg$ and $\tau^-[1,k+1] = \tau_0 gf$.
The death-time index is treated similarly.
\end{proof}

Here is the crux of the matter:

\begin{Lemma}
\label{lem:diamond}
In the situation of Theorem~\ref{thm:diamond}, the following filtrations are equal:
\begin{eqnarray*}
\rf(\Vv^+[1,k+1]) &=& \rf(\Vv^-[1,k+1])
\\
\lf(\Vv^+[k-1,n]) &=& \lf(\Vv^-[k-1,n])
\end{eqnarray*}
\end{Lemma}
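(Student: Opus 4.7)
The plan is to reduce both equalities to a single subspace identity that is equivalent to exactness of the diamond, and then handle the second equality by a reversal symmetry.

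Let $(R_0, R_1, \dots, R_{k-1}) = \rf(\Vv[1,k-1])$; this is the common right-filtration on $V_{k-1}$ shared by $\Vv^+$ and $\Vv^-$. Applying the recursive step of Definition~\ref{def:rfilt} twice to append the two middle arrows yields
\[
\rf(\Vv^+[1,k+1]) = \bigl(0,\; g_k^{-1}f_{k-1}(R_0),\; \dots,\; g_k^{-1}f_{k-1}(R_{k-1}),\; V_{k+1}\bigr)
\]
and
\[
\rf(\Vv^-[1,k+1]) = \bigl(0,\; f_k g_{k-1}^{-1}(R_0),\; \dots,\; f_k g_{k-1}^{-1}(R_{k-1}),\; V_{k+1}\bigr).
\]
The first equality therefore reduces to the identity
\[
g_k^{-1}(f_{k-1}(S)) \;=\; f_k(g_{k-1}^{-1}(S)) \qquad \text{for every subspace } S \leq V_{k-1}.
\]

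To prove the identity, I would argue as follows. A vector $v \in V_{k+1}$ lies in $g_k^{-1}(f_{k-1}(S))$ iff there exists $s \in S$ with $f_{k-1}(s) = g_k(v)$, i.e.\ $s \oplus v \in \Ker(D_2)$. By exactness of the diamond, $\Ker(D_2) = \Img(D_1)$, so this holds iff there exists $u \in U_k$ with $g_{k-1}(u) = s \in S$ and $f_k(u) = v$; equivalently, $v \in f_k(g_{k-1}^{-1}(S))$. The argument is reversible, establishing the desired equality of sets.

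For the left-filtration equality, the reversals $\overline{\Vv^+[k-1,n]}$ and $\overline{\Vv^-[k-1,n]}$ share a common prefix (namely $\overline{\Vv[k+1,n]}$) and their two final arrows reconstitute the same diamond, with the same four spaces $V_{k-1}, V_{k+1}, W_k, U_k$ and the same four linear maps $f_{k-1}, g_k, g_{k-1}, f_k$. Since the exactness condition refers only to these spaces and maps (and not to the zigzag orientation of $\Vv^\pm$), the diamond is still exact. Applying the right-filtration argument just given to the reversed pair yields $\rf(\overline{\Vv^+[k-1,n]}) = \rf(\overline{\Vv^-[k-1,n]})$, which is precisely $\lf(\Vv^+[k-1,n]) = \lf(\Vv^-[k-1,n])$ by definition of the left-filtration. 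The main obstacle in this proof is purely notational bookkeeping in the recursive expansion of $\rf$; the mathematical heart is the single subspace identity above, which is a direct translation of diamond exactness.
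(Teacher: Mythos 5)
Your proposal is correct and follows essentially the same route as the paper: reduce both equalities to the single subspace identity $g_k^{-1}f_{k-1}(S) = f_k g_{k-1}^{-1}(S)$ via the recursive expansion of $\rf$, prove that identity by unwinding membership into the condition $s\oplus v \in \Ker(D_2) = \Img(D_1)$, and obtain the left-filtration equality by reversal symmetry. The paper merely states that the second equality ``follows symmetrically,'' whereas you spell out that the reversed zigzags share a prefix and reconstitute the same exact diamond, but this is just a more explicit rendering of the same step.
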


\begin{proof}
Write $(R_0, R_1, \dots, R_{k-1}) = \rf(\Vv^+[1,k-1]) = \rf(\Vv^-[1,k-1])$.
By the recursive formula (Definition~\ref{def:rfilt}),
\[
\rf(\Vv^+[1,k+1])
=
(0, g_k^{-1} f_{k-1}(R_0), \dots, g_k^{-1} f_{k-1}(R_{k-1}), V_{k+1})
\]
and
\[
\rf(\Vv^-[1,k+1])
=
(0, f_k g_{k-1}^{-1}(R_0), \dots, f_k g_{k-1}^{-1}(R_{k-1}), V_{k+1}).
\]
Thus we can prove the first statement of the lemma by showing that
\[
f_k g_{k-1}^{-1}(R) = g_k^{-1} f_{k-1}(R)
\]
for any subspace $R \leq V_{k-1}$. We use first-order logic. Let $x \in V_{k+1}$. We have the following chain of equivalent statements.
\begin{eqnarray*}
&&
x \in f_k g_{k-1}^{-1}(R)
\\
&\Leftrightarrow&
(\exists z \in R) \, (\exists y \in U_{k}) \, ((g_{k-1}(y) = z)\,\&\,(f_k(y) = x))
\\
&\Leftrightarrow&
(\exists z \in R) \, (\exists y \in U_{k}) \, (D_1(y) = z \oplus x)
\\
&\Leftrightarrow&
(\exists z \in R) \, (z \oplus x \in \Img(D_1))
\end{eqnarray*}
On the other hand:
\begin{eqnarray*}
&&
x \in g_k^{-1} f_{k-1}(R)
\\
&\Leftrightarrow&
(\exists z \in R) \, (f_{k-1}(z) = g_k(x))
\\
&\Leftrightarrow&
(\exists z \in R) \, (z \oplus x \in \Ker(D_2))
\end{eqnarray*}
Since $\Img(D_1) = \Ker(D_2)$ by hypothesis, it follows that $f_k g_{k-1}^{-1}(R) = g_k^{-1} f_{k-1}(R)$.

This proves the first equality. The second equality follows symmetrically.
\end{proof}

\begin{proof}[Proof of Theorem~\ref{thm:diamond}]
We adopt the notation of Section~\ref{subsec:localize}, and consider the right- and left-filtrations at~$V_{k+1}$, for both $\Vv^+$ and $\Vv^-$. Since $\Vv^+[k+1,n] = \Vv^-[k+1,n]$ we have
\[
\Ll_{k+1}^+ = \Ll_{k+1}^-
\qquad
\mbox{and}
\qquad
\dt_{k+1}^+ = \dt_{k+1}^-,
\]
and by Lemma~\ref{lem:diamond} we have
\[
\Rr_{k+1}^+ = \Rr_{k+1}^-.
\]
Finally, $\bt_{k+1}^+$ agrees with $\bt_{k+1}^-$ except that $k, k+1$ are interchanged, according to Proposition~\ref{prop:diamond}.
Thus, when we use Theorem~\ref{thm:kinterval} to calculate the multiplicity of $[b,d]$ for $b \leq k+1 \leq d$, there is perfect agreement between $\Vv^+$ and $\Vv^-$ except that we must interchange $k, k+1$ when they occur as birth-times.

A symmetrical argument can be made, localizing at~$V_{k-1}$. 
%
%
When we compute the multiplicity of $[b,d]$ for $b \leq k-1 \leq d$, there is perfect agreement between $\Vv^+$ and $\Vv^-$ except that we must interchange $k, k-1$ when they occur as death-times.

We have covered all cases of the theorem except for intervals which meet neither $k-1$ nor $k+1$. Intervals contained in $[1,k-2]$ are automatically the same for $\Vv^+$ and $\Vv^-$ because they can be computed by restricting to $\Vv^+[1,k-1]$ and $\Vv^-[1,k-1]$, which are equal. Similarly, intervals contained in $[k+2,n]$ are the same for $\Vv^+$ and $\Vv^-$, by restricting to $\Vv^+[k+1,n] = \Vv^-[k+1,n]$.

Finally, consider intervals $[k,k]$. Nothing can be said about those.
\end{proof}

\subsection{The Strong Diamond Principle}
\label{subsec:strongdiamond}

The Diamond Principle can usefully be applied to the following diagram of topological spaces and continuous maps. The four maps in the central diamond are inclusion maps, and the remaining maps $\leftrightarrow$ are arbitrary.
\[
\divide\dgARROWLENGTH by2
\begin{diagram}
\node[5]{A \cup B}
\\
\node{X_1}
  \arrow{e,t,<>}{}
\node{\cdots}
  \arrow{e,t,<>}{}
\node{X_{k-2}}
  \arrow{e,t,<>}{}
\node{A}
  \arrow{ne,t}{}
\node[2]{B}
  \arrow{nw,t}{}
  \arrow{e,t,<>}{}
\node{X_{k+2}}
  \arrow{e,t,<>}{}
\node{\cdots}
  \arrow{e,t,<>}{}
\node{X_n}
\\
\node[5]{A \cap B}
  \arrow{nw,t}{}
  \arrow{ne,t}{}
\end{diagram}
\]
Let $\Xx^+, \Xx^-$ denote the upper and lower zigzag diagrams contained in this picture; so $\Xx^+$ passes through $A \cup B$ and $\Xx^-$, passes through $A \cap B$.

\begin{Theorem}[The Strong Diamond Principle]
\label{thm:strongdiamond}
Given $\Xx^+$ and $\Xx^-$ as above, there is a (complete) bijection between the multisets $\Pers(H_*(\Xx^+))$ and $\Pers(H_*(\Xx^-))$. Intervals are matched according to the following rules:
\begin{itemize}
\item
$[k,k] \in \Pers(H_{\ell+1}(\Xx^+))$ is matched with $[k,k] \in \Pers(H_{\ell}(\Xx^-))$.
\end{itemize}
In the remaining cases, the matching preserves homological dimension:
\begin{itemize}
\item
Type $[b,k]$ is matched with type $[b,k-1]$ and vice versa, for $b \leq k-1$.
\item
Type $[k,d]$ is matched with type $[k+1,d]$ and vice versa, for $d \geq k+1$.
\item
Type $[b,d]$ is matched with type $[b,d]$, in all other cases.
\end{itemize}
\end{Theorem}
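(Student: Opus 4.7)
The plan is to apply the ordinary Diamond Principle (Theorem~\ref{thm:diamond}) dimension by dimension, and then to use the Mayer--Vietoris connecting homomorphism to pair up the $[k,k]$ intervals that the ordinary principle leaves unmatched.

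First I would apply $H_\ell(-;\Ff)$ for each $\ell \geq 0$ to produce zigzag modules $H_\ell(\Xx^+)$ and $H_\ell(\Xx^-)$. Because the four maps in the central diamond are inclusions, after applying $H_\ell$ they become exactly the maps of the Mayer--Vietoris square for the pair $(A,B)$; the standard Mayer--Vietoris exactness (valid for any excisive pair, in particular for subcomplexes of a simplicial complex) gives $\Img(D_1^{(\ell)}) = \Ker(D_2^{(\ell)})$, which is precisely the hypothesis of Theorem~\ref{thm:diamond}. Applying Theorem~\ref{thm:diamond} once per homological degree immediately establishes the four bullet points concerning intervals other than $[k,k]$, with matchings preserving~$\ell$.

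It remains to match the $[k,k]$ intervals, and this is where the homological dimension shift appears. By the remark following Theorem~\ref{thm:diamond}, the $[k,k]$ summands of $\Pers(H_\ell(\Xx^+))$ span $\Coker(D_2^{(\ell)})$ and those of $\Pers(H_\ell(\Xx^-))$ span $\Ker(D_1^{(\ell)})$. (If this remark is not taken as given I would justify it in a short aside: reduce to the bi-streamlined case via Proposition~\ref{prop:kstream}, in which $f_{k-1}, g_k$ are both injective on the $\Vv^+$ side, and then apply Theorem~\ref{thm:kinterval} at index~$k$; the $[k,k]$ multiplicity comes out as $\dim V_k - \dim(\Img f_{k-1} + \Img g_k) = \dim \Coker(D_2^{(\ell)})$ on the $\Vv^+$ side and as $\dim(\Ker g_{k-1} \cap \Ker f_k) = \dim \Ker(D_1^{(\ell)})$ on the $\Vv^-$ side.) The long exact Mayer--Vietoris sequence supplies the segment
\[
H_{\ell+1}(A) \oplus H_{\ell+1}(B) \xrightarrow{D_2^{(\ell+1)}} H_{\ell+1}(A \cup B) \xrightarrow{\partial} H_\ell(A \cap B) \xrightarrow{D_1^{(\ell)}} H_\ell(A) \oplus H_\ell(B),
\]
from which exactness at the two middle terms yields an isomorphism $\Coker(D_2^{(\ell+1)}) \cong \Ker(D_1^{(\ell)})$ induced by the connecting homomorphism~$\partial$. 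Taking dimensions matches $[k,k] \in \Pers(H_{\ell+1}(\Xx^+))$ bijectively with $[k,k] \in \Pers(H_\ell(\Xx^-))$, establishing the first bullet point. The boundary case $\ell = -1$ is vacuous: $D_2^{(0)}$ is surjective in unreduced Mayer--Vietoris, so $\Pers(H_0(\Xx^+))$ contains no $[k,k]$ intervals needing a partner.

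The only genuine topological input to the argument is Mayer--Vietoris; once it is on the table, both the exact-diamond hypothesis of Theorem~\ref{thm:diamond} and the dimension-shifting isomorphism for the $[k,k]$ intervals are immediate consequences. The hardest step, such as it is, is keeping the bookkeeping honest --- in particular confirming that the remark after Theorem~\ref{thm:diamond} genuinely identifies $\Coker(D_2^{(\ell)})$ and $\Ker(D_1^{(\ell)})$ with the $[k,k]$-multiplicities --- but this is a short and entirely linear-algebraic calculation using the localization tools of Section~\ref{subsec:localize}.
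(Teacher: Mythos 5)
Your proof proposal is correct and follows essentially the same route as the paper's own proof: apply $H_\ell$ degree by degree, invoke Mayer--Vietoris exactness to verify the hypothesis of Theorem~\ref{thm:diamond}, apply that theorem to handle all intervals other than $[k,k]$, and then use the connecting homomorphism $\partial \colon H_{\ell+1}(A \cup B) \to H_\ell(A \cap B)$, via the isomorphism $\Coker(D_2^{(\ell+1)}) \cong \Ker(D_1^{(\ell)})$, to match the $[k,k]$ intervals with a shift in homological degree. The short aside you include (reducing to the bi-streamlined case and applying Theorem~\ref{thm:kinterval} to identify the $[k,k]$ multiplicities with $\dim\Coker D_2$ and $\dim\Ker D_1$) is a clean justification of the remark the paper states without proof, and the observation that the $\ell=-1$ boundary case is vacuous is a useful extra precision.
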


\begin{proof}
For any~$\ell$, apply the homology functor $H_\ell$ to the diagram. The central diamond
\[
\divide\dgARROWLENGTH by2
\begin{diagram}
\node{H_\ell(A)}
  \arrow{e}
\node{H_\ell(A \cup B)}
\\
\node{H_\ell(A \cap B)}
  \arrow{n}
  \arrow{e}
\node{H_\ell(B)}
  \arrow{n}
\end{diagram}
\]
is exact by virtue of the Mayer--Vietoris theorem, according to which
\[
\ldots
\longrightarrow
H_\ell(A \cap B)
\stackrel{D_1}{\longrightarrow}
H_\ell(A) \oplus H_\ell(B)
\stackrel{D_2}{\longrightarrow}
H_\ell(A \cup B)
\longrightarrow
\ldots
\]
is an exact sequence.
The Diamond Principle therefore applies to $H_\ell(\Xx^+)$ and $H_\ell(\Xx^-)$, and we have a partial bijection which accounts for all intervals except those of type $[k,k]$.

Now consider the connecting homomorphism in the same Mayer--Vietoris sequence:
\[
\ldots
\stackrel{D_2}{\longrightarrow}
H_{\ell+1}(A \cup B)
\stackrel{\partial}{\longrightarrow}
H_{\ell}(A \cap B)
\stackrel{D_1}{\longrightarrow}
\ldots
\]
By exactness, $\partial$ induces an isomorphism between the cokernel of $D_2$ and the kernel of~$D_1$. But the $[k,k]$ summands of $\Pers(H_{\ell+1}(\Xx^+))$ precisely span $\Coker(D_2)$, whereas the $[k,k]$ summands of $\Pers(H_\ell(\Xx^-))$ span $\Ker(D_1)$.
This establishes the claimed bijection between the $[k,k]$ intervals.
\end{proof}

\begin{Example}
\label{ex:1param}
Let $\Xx = (X_1, \dots, X_n)$ be a sequence of simplicial complexes defined on a common vertex set. Suppose these have arisen in some context where each transition $X_i$ to $X_{i+1}$ is regarded as being a `small' change. There are two natural zigzag sequences linking the~$X_i$.

The union zigzag, $\Xx_\cup$:
\[
\divide\dgARROWLENGTH by3
\begin{diagram}
\node[2]{X_1 \cup X_2}
\node[2]{\dots}
\node{\dots}
\node[2]{X_{n-1} \cup X_n}
\\
\node{X_1}
  \arrow{ne}
\node[2]{X_2}
  \arrow{nw}
  \arrow{ne}
\node[3]{X_{n-1}}
  \arrow{nw}
  \arrow{ne}
\node[2]{X_{n}}  
  \arrow{nw}
\end{diagram}
\]

The intersection zigzag, $\Xx_\cap$:
\[
\divide\dgARROWLENGTH by3
\begin{diagram}
\node{X_1}
\node[2]{X_2}
\node[3]{X_{n-1}}  
\node[2]{X_{n}}  
\\
\node[2]{X_1 \cap X_2}
  \arrow{nw}
  \arrow{ne}
\node[2]{\dots}
  \arrow{nw}
\node{\dots}
  \arrow{ne}
\node[2]{X_{n-1} \cap X_n}
  \arrow{nw}
  \arrow{ne}
\end{diagram}
\]
We can think of these as being indexed by the half-integers $\{1, 1\half, 2, 2\half, \dots, n\}$.

We can apply the Strong Diamond Principle $n-1$ times to derive the following relationship between the zigzag persistence of the two sequences $\Pers(H_\ell(\Xx_\cap))$ and $\Pers(H_\ell(\Xx_\cup))$.
Restricting to the integer indices, there is a coarse equality:
\[
\Pers(H_\ell(\Xx_\cup))|_{\{1, \dots, n\}}
=
\Pers(H_\ell(\Xx_\cap))|_{\{ 1, \dots, n \}}
\]
More finely, there is a partial bijection between $\Pers(H_\ell(\Xx_\cup))$ and $\Pers(H_\ell(\Xx_\cap))$. Intervals $[k\half, k\half]$ shift homological dimension by $+1$ (from the intersection sequence to the union sequence). Otherwise $[b,d]  \leftrightarrow [b',d']$ where $\{b,b'\}$ is an unordered pair of the form $\{k\half, k+1\}$ and $\{d, d'\}$ is an unordered pair of the form $\{k, k\half\}$; dimension is preserved.
Figure~\ref{fig:N2U} illustrates the complete correspondence as a transformation of the persistence diagram, for $n=5$.
\begin{figure}
\centerline{
\includegraphics[scale=0.5]{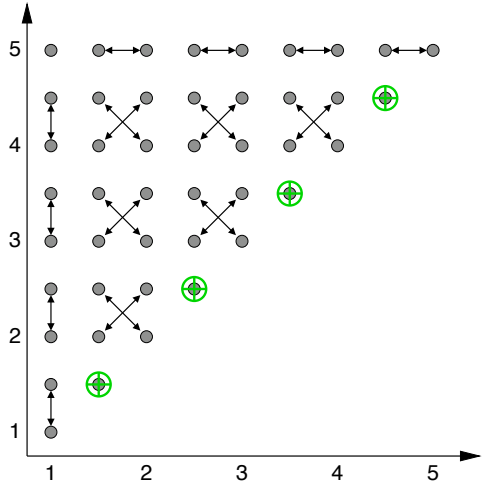}
}
\caption{From $\Pers(H_*(\Xx_\cap))$ to $\Pers(H_*(\Xx_\cup))$, for $n=5$: points in the persistence plane move according to the arrows; points marked $\oplus$ stay fixed and increase homological dimension by~1.}
\label{fig:N2U}
\end{figure}
\end{Example}

\section*{Concluding Remarks}
\label{sec:conclusions}

We have presented the foundations of a theory of zigzag persistence which, we believe, considerably extends and enriches the well known and highly successful theory of persistent homology. Zigzag persistence originates in the work of Gabriel and others in the theory of quiver representations. One of our goals has been to bridge the gap between the quiver literature (which is read largely by algebraists) and the current language of applied and computational topology. To this end, we have presented an algorithmic form of Gabriel's structure theorem for $A_n$ quivers, and have indicated the first steps towards integrating these ideas into tools for applied topology.

There are several ways in which this work is incomplete. The most significant omission is an algorithm for computing zigzag persistence in a homological setting (as distinct from the somewhat sanitised vector space algorithm described in Section~\ref{subsec:algconc}). We address this gap in a forthcoming paper with Dmitriy Morozov~\cite{C_dS_M_2008}, where we present an algorithm for computing the zigzag persistence intervals of a 1-parameter family of simplicial complexes on a fixed vertex set.

We have made no effort in this paper to flesh out the applications suggested in Section~\ref{sec:intro}. There is often a substantial gap between the concrete world of point-cloud data sets and the ideal world of simplicial complexes and topological spaces. We intend to develop some of these applications in future work. Meanwhile, we have given priority to establishing the theoretical language and tools. The Diamond Principle is particularly powerful. In the manuscript with Morozov~\cite{C_dS_M_2008}, we show that the Diamond Principle can be used to establish isomorphisms between several different classes of persistence invariants of a space with a real-valued (e.g.\ Morse) function. In particular, we use zigzag persistence to resolve an open conjecture concerning extended persistence~\cite{CS_E_H_2008}. This supports our prejudice that zigzag persistence provides the appropriate level of generality and power for understanding the heuristic concept of persistence in its many manifestations.

\subsection*{Acknowledgements}
The authors wish to thank Greg Kuperberg, Konstantin Mischaikow and Dmitriy Morozov for helpful conversations and M.~Khovanov for helpful correspondence. The authors gratefully acknowledge support from DARPA, in the form of grants HR0011-05-1-0007 and HR0011-07-1-0002. The second author wishes to thank Pomona College and Stanford University for, respectively, granting and hosting 
his sabbatical during late 2008.

\bibliographystyle{plain}
\bibliography{}

\end{document}